\newtheorem{theorem}{Theorem}
\newtheorem{lemma}{Lemma}
\newtheorem{corollary}[lemma]{Corollary}
\newtheorem{obs}[lemma]{Observation}
\newcommand{\etal} {{\it et al.}\xspace}
\definecolor {name} {rgb} {0.5,0.0,0.0}
\begin{document}


\title{Maximum-Area Triangle in a Convex Polygon, Revisited} 

\author[1]{Vahideh Keikha}
\affil[1]{\small Department of Mathematics and Computer Science, Amirkabir University of  Technology, Tehran, Iran}


\author[2]{Maarten L\"offler}
\affil[2]{\small Department of Information and Computing Sciences, Utrecht University, Utrecht, The Netherlands}
\author[1] {Ali Mohades}

\author[2]{J\'er\^ome Urhausen}
\author[2]{Ivor van der Hoog}

\maketitle

\begin{abstract}
We revisit the following problem: Given a convex polygon $P$, find the largest-area
inscribed triangle. We show by example that the linear-time algorithm presented in 1979 by Dobkin
and Snyder~\cite{45} for solving this problem fails. We then proceed to show that with a small
adaptation, their approach does lead to a quadratic-time algorithm. We also present a more
involved $O(n\log n)$ time divide-and-conquer algorithm. Also  we show by example that the algorithm  presented in 1979 by Dobkin
and Snyder~\cite{45} for finding the largest-area $k$-gon that is inscribed in a convex polygon fails to find the optimal solution for $k=4$. 
Finally, we discuss the implications
of our discoveries on the literature. 
\end{abstract}











\section{Introduction}

We revisit a classic problem in computational geometry: Given a convex polygon $P$, find the largest-area inscribed triangle. Figure~\ref {fig:example} illustrates the problem.
In 1979, Dobkin and Snyder~\cite{45} presented a linear-time algorithm to solve this problem. 
In this note, we present an example of a polygon on which their algorithm fails.
There exists, however, another linear-time algorithm for computing the largest inscribed triangle is presented by Chandran and Mount~\cite{chandran}, originally intended to solve the parallel version of the problem. Since the initial posting of this manuscript on arXiv~\cite {kluv}, two new linear-time algorithms for solving the problem have already been claimed by Kallus~\cite{kallus} and Jin~\cite{jin}.

The counter example is shown in Figure~\ref {fig:counter} and requires careful placement of the vertices of the polygon: our coordinates are integers in the range $[0, 5000]$, and a range of this order of magnitude seems to be necessary. We remark that  in~\cite{parvu} the authors implemented the presented algorithm in~\cite{chandran}, however, for the correctness of the implication, they produced a set of $10,000$ random convex polygons with range of vertices in $[0,1000]$, although our counter example clarifies that  the presented algorithm by Dobkin and Snyder~\cite{45}  always works correctly on any random convex polygon in the range $[0,1000]$. 

Our counter example was found by solving a system of quadratic equations.
We carefully analyze the underlying geometry and give insight into the reason for the failure of the original algorithm, and use this insight to create a new $O(n \log n)$ algorithm to solve the problem.

The study of geometric containment problems was initiated by Michael Shamos~\cite{shb}, who considered the question of finding the longest line segment in a convex polygon, also known in computational geometry as the {\em diameter} of the polygon.
Shamos presented a linear-time algorithm in his thesis~\cite {sham}, based on a technique which is now known under the name {\em rotating calipers}~\cite {rotcal}.
Dobkin and Snyder~\cite{45} also claimed a linear-time algorithm for computing the diameter, which was later found to be incorrect by Avis \etal~\cite{mmud}.

Dobkin and Snyder~\cite{45} originally claimed to have a generic linear-time algorithm for finding the largest inscribed $k$-gon inside a convex polygon (for constant $k$), but this claim was later retracted:
Boyce \etal~\cite {48} observe that the algorithm by Dobkin and Snyder fails for $k=5$ and instead present a $O(kn \log n + n \log^2 n)$ algorithm for finding the largest $k$-gon.
Aggarwal \etal~\cite{msearch} improve their result to $O(kn+ n \log n)$ time by using a matrix search method.
However, both algorithms still rely on the correctness of the Dobkin and Snyder algorithm for triangles, which to this day remains uncontested.
As we show here, this algorithm is also incorrect, and by extension, so are the solutions by Boyce \etal and Aggarwal \etal. In Section~\ref {sec:applications}, we discuss the implications of our finding in more detail.

Furthermore, We also show by  example that the presented algorithm~\cite{45} also fails to find the optimal solution for $k=4$. Our counter example is a polygon on 16 vertices in the range [0, 26500].
Our counter example, combined with the work in  \cite {mmud} and \cite{48} would suggest that the problem of finding the largest-area quadrilateral in linear time is still open. Although Boyce \etal~\cite{48} claimed one is presented by Shamos~\cite{shb}, but the cited manuscript cannot be found on-line.

\begin {figure}
  \includegraphics{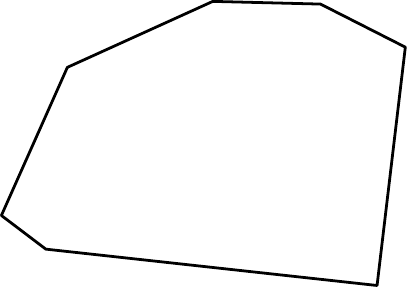}
  \centering
  \includegraphics{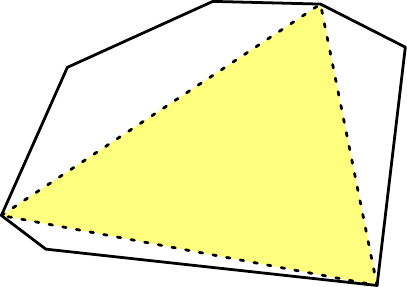}
  \caption {(left) A convex polygon. (right) The largest (by area) inscribed triangle.}
  \label {fig:example}
\end {figure}

\subsection{Related work}

The largest-area triangle problem falls in a broader class of geometric optimization problems, where the goal is to find some object inscribed inside another object. 
During the past 40 years, many optimization measures have been studied: 

Chang \etal ~\cite{ch2} studied the problem of computing the largest-area or largest-perimeter convex polygon that is inscribed in a simple polygon $P$. They presented an $O(n^7)$ time algorithm for the largest-area inscribed convex hull. They solved this   problem by using the fact that a largest-area convex polygon must be the intersection of $P$ and a set of $m$ half-planes defined by $m$ chords of $P$, where $m$ is smaller than the number of reflex vertices of $P$.  They also presented  an $O(n^6)$ time algorithm for the largest-perimeter  inscribed convex hull. Obviously, when the given polygon is convex,the solution to both problems is unique and can be found in linear time.
 Cabello \etal~\cite{cabello2014peeling} presented a  randomized near-linear-time $(1- \varepsilon)$-approximation algorithm for this problem, with running time  in $O(n(\log^2 n+(1/ \varepsilon^3)\log n+1/\varepsilon^4))$  time, and with probability at least $2/3$, the result has an area of at least $(1-\varepsilon)$ times the area of an optimal solution..

 Daniels \etal ~\cite{da3} presented an $O(n \log^2 n)$ time algorithm for finding  the largest-area axis-parallel rectangle that is inscribed in a simple polygon.  
They also presented  an $\Omega (n \log n)$ lower bound for this  problem. The running time also matches with the one of the best known algorithm for finding the largest-area axis-parallel rectangle that is inscribed in an  orthogonal polygons. 

 Cabello \etal ~\cite{linc2} studied the problem of finding the  largest-area or largest-perimeter  rectangle that is inscribed in a convex polygon. They presented an exact algorithm that runs in $O(n^3)$ time, and a $(1-\varepsilon)$-approximation algorithm that runs in $O(\varepsilon ^{-\frac{1}{2}} \log n+ \varepsilon ^ {-\frac{3}{2}})$ time; see also~\cite{alt1995,daniels1997,hall2006,knauer2012}.
 
  
DePano, Ke and O'Rourke~\cite{7} studied the problem of computing  the largest-area square and equilateral triangle contained in a convex polygon. 
  
   Jin and Matulel~\cite{8}  studied the problem of computing the largest-area parallelogram that is inscribed in a convex polygon $P$, and  presented an $O(n^2)$ time algorithm, which is based on the fact that largest-area parallelogram   must have all of its corners on the perimeter of $P$.

Melissaratos and Souvaine~\cite{melissaratos} studied the problem of computing the largest-area or perimeter triangle that is inscribed in  a non-convex polygon. They presented an $O(n^3)$ time algorithm for these problems.

When the input is an (unstructured) set of points in the plane, rather than a convex polygon, we may ask a similar question: what is the largest-area triangle or $k$-gon that uses only points of the given set as vertices? Clearly, we can attack this problem by first computing the convex hull of the given points. However, this takes $O(n \log n)$ time, and indeed Drysdale and Jaromczyk show that computing the largest-area or largest-perimeter $k$-gon must take at least $\Omega (n \log n)$ time for any $k \ge 2$, using a reduction from set disjointness~\cite {52}.


\subsection{Contribution}
  In this paper, we obtain the following results.
  
  \begin {itemize} [noitemsep]
    \item We present a $9$-vertex polygon on which the algorithm by Dobkin and Snyder for computing the largest-area triangle~\cite {45} fails. By extensions, the algorithm by Boyce \etal and Aggarwal~\etal for computing the largest-area $k$-gon~\cite {48,msearch} also fail.
    In particular, the example disproves Lemma 2.5 of Dobkin and Snyder~\cite{45} and Lemma 3.2 of Boyce \etal~\cite {48} (Section~\ref{sec:coex}).
    \item We analyze the geometry and present insight into the reason of the failure of these lemmas. We then use this insight to give a quadratic-time algorithm for computing the largest-area triangle in the same spirit as Dobkin and Snyder's algorithm, which we prove is correct (Section~\ref{sec:quadt}). 
    \item We then extend our analysis significantly and present a divide-and-conquer algorithm that works in $O(n \log n)$ time (Section~\ref{sec:dandc}).

  \item We present a $16$-vertex polygon on which the algorithm by Dobkin and Snyder for computing the largest-area quadrangle~\cite {45} fails.

  \end {itemize}

As an effect of the problem's central nature in computational geometry, a number of follow-up results that depend on largest-area triangles or $k$-gons, either directly by using the algorithm as a preprocessing step or indirectly by relying on false claimed properties, will have to be reevaluated~\cite {45,48,msearch,euro17,bhat}.
We discuss several of these in some detail in Section~\ref {sec:applications}.

\section{Preliminaries}

  In this section, we first review several core concepts introduced by Dobkin and Snyder~\cite {45} and Boyce \etal~\cite {48}, and then describe the algorithm by Dobkin and Snyder in detail.
  Although we are primarily concerned with triangles, we introduce some concepts more generally for $k$-gons; as we discuss in Section~\ref {sec:applications}, the question of finding the largest-area $k$-gon inside a convex polygon is also impacted (and reopened) by our results.

\subsection {Definitions}

Let $P$ be a convex polygon with $n$ vertices.
We say a convex polygon $Q$ is \emph {$P$-aligned} if the vertices of $Q$ are a subset of the vertices of $P$.
Note that there always exists a $P$-aligned largest-area $k$-gon inscribed in $P$ (assuming $k \le n$).
Boyce \emph{et al.} \cite{48} define a \emph{rooted} polygon $Q$ with root $r \in P$ to be any $P$-aligned polygon that includes $r$.  Let $P$ be a convex polygon. Two $P$-aligning polygons  are said to \textit {interleave}, if between every two successive
vertices of one, there is a vertex of the other (possibly coinciding with one of them) \cite{48}.\
Dobkin and Snyder \cite{45} define a \emph{stable} triangle to be a rooted triangle $T = pqr$, where $r$ is the root, such that any other $P$-aligned triangle $\triangle p'qr$ or $\triangle pq'r$ has area smaller   than (or equal to) $T$.
Henceforth, we will refer to such triangles as \emph {2-stable}, and to $p$ and $q$ as stable vertices.
We also define a \emph {3-stable} triangle to be a (rooted or unrooted) $P$-aligned triangle $T = pqr$ such that any other $P$-aligned triangle $\triangle p'qr$ or $\triangle pq'r$ or $\triangle pqr'$ has area smaller than (or equal to) $T$.
Note that in degenerate cases, there could be multiple (stable) triangles with equal area. 
In the remainder, we denote by $\Lambda$ the largest-area triangle, or, if it is not unique, any triangle with maximum area.
Note that $\Lambda$ is 3-stable.

\subsection {Dobkin and Snyder's triangle algorithm}
 We will now recall the \textit{triangle algorithm} \cite{45}, outlined in Algorithm~\ref {alg:triangle} and illustrated in Figure~\ref {fig:trialg}\footnote{For Simplification  of the implication, our algorithm is presented in C++ pseudo-code, it is easy to observe that it is  equivalent to the original presented  algorithm in ~\cite{45}. The same is also hold for $k=4$. }.

\begin {figure}
  \includegraphics{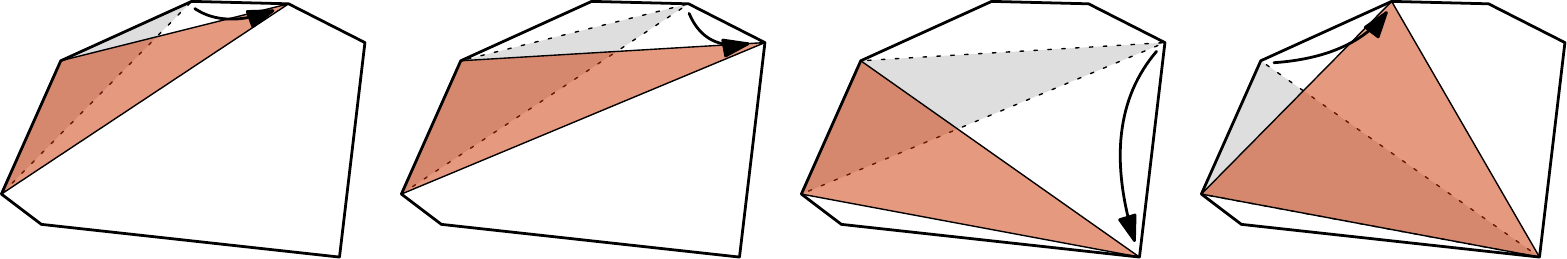}
   \centering
  \caption {The first four steps of Algorithm~\ref{alg:triangle}.}
  \label {fig:trialg}
\end {figure}

  Let $P=\{p_0,p_1,\ldots,p_{n-1}\}$. We assume that $P$ is given in a clockwise orientation. Assume an arbitrary vertex of $P$ is the root, assign this vertex and its two subsequent vertices  in the clockwise order on the boundary of $P$ to variables $a, b$ and $c$. 
  We then ``move $c$ forward'' along the  boundary of $P$ as long as this increases the area of $\triangle abc$.
  If we can no longer advance $c$, we advance $b$ if this increases the area of $\triangle abc$, then try again to advance $c$. If we cannot advance either $b$ or $c$ any further, we advance $a$.
  We keep track of the largest-area triangle found, and stop when $a$ returns to the starting position.
Since $a$ visits $n$ vertices and $b$ and $c$ each visit fewer than $2n$ vertices, the algorithm runs in $O(n)$ time (assuming we are given the cyclic ordering of the points on $P$).


Dobkin and Snyder \cite{45} claim that Algorithm~\ref {alg:triangle} computes the largest-area triangle inscribed in $P$. Their argument hinges on the following key lemma. 

\begin{lemma}[{\cite[Lemma 2.5]{45}}]
\label{doub}
Let $P=\{p_0, p_1, \ldots, p_{n-1}\}$ be a convex polygon. There exists an $i$
($0 \leq i \leq n-1$) such that the $p_i$-anchored maximum triangle is the largest-area triangle inscribed in $P$.
\end{lemma}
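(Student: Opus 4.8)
The plan is to connect the global optimum $\Lambda$ to the anchored local search that the algorithm performs. Write $\Lambda = \triangle p_a p_b p_c$ with its vertices in clockwise order, and fix attention on the anchor $r = p_a$. The natural statement to establish is that the hill-climbing procedure which alternately advances the two non-root vertices, when rooted at $p_a$, terminates at $\Lambda$; taking $i = a$ would then prove the lemma. Thus the whole argument reduces to understanding the behaviour of the anchored search at a vertex of $\Lambda$.

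First I would establish the one-dimensional building block. For a fixed root $r$ and a fixed second vertex $q$, the area of $\triangle pqr$ equals $\tfrac12 |qr|$ times the distance from $p$ to the line through $q$ and $r$. As $p$ traverses the vertices of the convex polygon $P$, this distance is a unimodal function of the position of $p$: it increases up to the vertex farthest from the supporting line and then decreases. Consequently the ``advance while the area increases'' step is justified, since it lands on the true maximizer in $p$ for the given $q$ and $r$, and symmetrically for advancing $q$. This yields, for each fixed root, a well-defined coordinate-ascent dynamic on the two free vertices whose fixed points are exactly the $2$-stable triangles rooted at $r$.

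The crux is then to argue that the dynamic rooted at $p_a$ actually reaches $\Lambda$ rather than stalling elsewhere. I would try to show two things: (i) that $\Lambda$, being $3$-stable, is in particular a $2$-stable triangle rooted at $p_a$ and hence a fixed point of the dynamic; and (ii) that it is the only fixed point reachable under the monotone advancing rule, so that the search cannot terminate at a different, smaller-area $2$-stable triangle. Step (i) is immediate from $3$-stability. Step (ii) is where I expect all the difficulty to concentrate: separate unimodality of the area in each of the two free vertices does \emph{not} by itself forbid the joint area function, defined on the product of the two boundary chains, from possessing several distinct $2$-stable configurations. If such spurious local maxima exist, the monotone hill-climbing can be trapped at one of them and never advance to $\Lambda$, and the argument collapses.

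To close this gap I would attempt to prove a monotonicity/interleaving invariant: that as the root advances the optimal pair $(p,q)$ advances as well, so that the two stable vertices never need to ``back up,'' and that at some root position the resulting $2$-stable triangle is forced to coincide with the optimum. Concretely, one would want the stable pair of the current root and the vertex pair of $\Lambda$ to interleave, ruling out the possibility that $\Lambda$'s vertices lie ``behind'' the advancing stable pair for every anchor simultaneously. Establishing such a no-backtracking guarantee is the main obstacle, and any proof of the lemma stands or falls on it; I would expect this to be exactly the point at which the geometry resists the argument.
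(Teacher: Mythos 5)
You have correctly located the crux, but the gap you flag in step~(ii) is not merely difficult --- it is unbridgeable, because the statement is false. This lemma (Lemma~2.5 of Dobkin and Snyder) is precisely what this paper refutes rather than proves: the paper constructs a $9$-vertex convex polygon with vertices $a_0,a_1,a_2,b_0,b_1,b_2,c_0,c_1,c_2$ in which $\triangle a_1a_2a_0$, $\triangle b_1b_2b_0$ and $\triangle c_1c_2c_0$ are all $2$-stable yet strictly smaller than the optimum $\Lambda = \triangle a_0b_0c_0$, and in which, for \emph{every} anchor $p_i$, the monotone coordinate ascent of Algorithm~\ref{alg:triangle} terminates at a suboptimal $2$-stable triangle; the algorithm outputs $\triangle c_0c_1c_2 \neq \Lambda$. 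So the $p_i$-anchored maximum triangle (which, as the paper stresses, is the triangle the hill-climbing actually finds, not the largest rooted triangle at $p_i$) differs from $\Lambda$ for all $i$ simultaneously, and no choice of $i$ rescues the claim. Your own closing sentence --- that the proof ``stands or falls'' on the no-backtracking invariant and that the geometry may resist there --- is exactly where it falls.

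Concretely, the invariants you hope for cannot be established. First, your step~(i) only shows that $\Lambda$ is a \emph{fixed point} of the dynamic rooted at one of its own vertices; the ascent stops at the first fixed point it reaches, which need not be $\Lambda$. Second, a single vertex can root many $2$-stable triangles --- linearly many per vertex and $\Theta(n^2)$ in total (Observation~\ref{obs:linear} and its corollary) --- and the paper shows their areas along the boundary need not be monotone, so no monotonicity argument can force the first encountered fixed point to be the largest one. Third, the interleaving facts you invoke are true but too weak: all $2$-stable triangles on a common root interleave (Lemma~\ref{lem:myinl}), and the largest \emph{rooted} triangle interleaves $\Lambda$ (Lemma~\ref{lem:interl}), yet interleaving with $\Lambda$ does not imply that the ascent path ever visits $\Lambda$. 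What the interleaving structure does buy is the paper's repair of the approach: enumerating \emph{all} $2$-stable triangles per root yields a correct quadratic-time algorithm, and using the largest rooted triangle (computable in linear time) as a splitting tool yields the $O(n \log n)$ divide-and-conquer algorithm.
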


We claim that Lemma 2.5~\cite{45} 
is false.
In the lemma, the {\em $p_i$-anchored maximum triangle} refers to the largest triangle found by the algorithm while $a = p_i$. Note that this is not necessarily the same as the largest-area rooted triangle at $p_i$. This essential observation lies at the heart of the following construction.

\begin{algorithm} [H]
\SetAlgoLined

\caption{Triangle algorithm}


\label {alg:triangle}

     {\bf Input} {$P$: a convex polygon, $r$: a vertex of $P$}\\
     {\bf Output} {$T$: a triangle}\\
     {\bf Legend} Operation {\texttt{\textit{next}} means the next vertex in clockwise order of $P$}\\
     a = r\\
     b = \texttt{\textit{next}}(a)\\
     c = \texttt{\textit{next}}(b)\\
     m = $\triangle$abc \\
     
     \While{True}
{
       \While{$\triangle$ab\texttt{next}(c) $\geq $ $\triangle$abc or $\triangle$a\texttt{next}(b)c $\geq$ $\triangle$abc}
       {
	\If{$\triangle$ab\texttt{next}(c) $\geq $ $\triangle$abc}
       {
           c = \texttt{\textit{next}}(c)\;
       }
       \If{$\triangle$a\texttt{next}(b)c $\geq$ $\triangle$abc}
       {
         b = \texttt{\textit{next}}(b)\;
       }
}
       \If{$\triangle$abc $\geq$ $m$}
       {
         m = $\triangle$abc
       }

        a = \texttt{\textit{next}}(a)\;
       \If{a=r}
    {
       \Return m\;
     }
 }    
\end{algorithm}

\section{Counter-example to Algorithm~\ref {alg:triangle}}\label {sec:coex}

 In  Figure~\ref{fig:counter} we provide a polygon $P$ on $9$ vertices such that the largest-area inscribed triangle and the triangle computed by Algorithm~\ref {alg:triangle} are not the same.
 We use the following points:
  $a_1=(4752,4262), a_2=(3383,413), b_1=(759,2927), b_2=(4745,4322), c_1=(1213,691), c_2=(2506,4423), a_0=(3040,4460), b_0=(1000,1000), c_0=(5000,1000)$.
  The largest-area triangle is $\triangle a_0 b_0 c_0$; however, Algorithm~\ref {alg:triangle} reports triangle $\triangle c_0 c_1 c_2$ as the largest-area triangle.
 
\begin {figure}
  \includegraphics{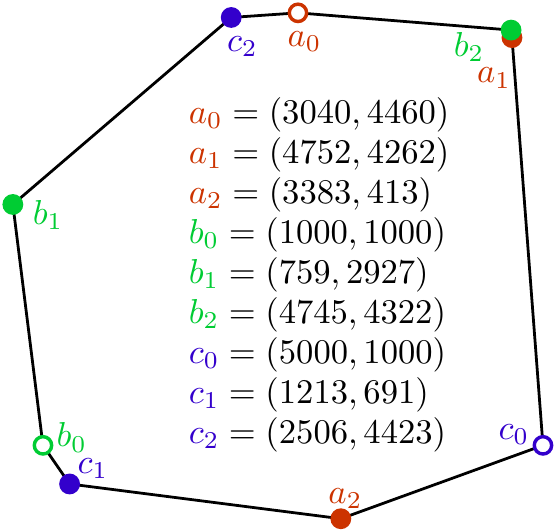}
   \centering
  \includegraphics{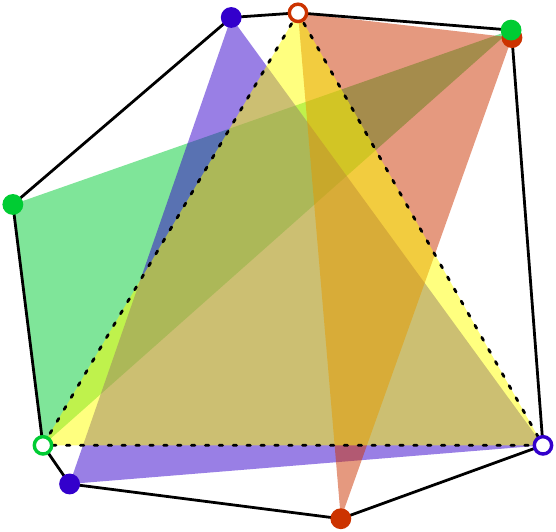}
  \caption {(left) A polygon on $9$ vertices. (right) Triangles $a_0 a_1 a_2$ (red), $b_0 b_1 b_2$ (green) and $c_0 c_1 c_2$ (blue) are all 2-stable, but smaller than triangle $a_0 b_0 c_0$ (yellow).}
  \label {fig:counter}
\end {figure}
 


We obtained $P$ by solving  the following problem. Let $\triangle a_0b_0c_0$ be the globally maximum area triangle on $P$, find six vertices $a_1$, $a_2$, $b_1$, $b_2$, $c_1$ and $c_2$ such that each of $\triangle a_1a_2a_0$, $\triangle b_1b_2b_0$ and $\triangle c_1c_2c_0$ are 2-stable triangles with area strictly less than $\triangle a_0b_0c_0$, and there is no other triangle on $P=\{a_0,b_0,c_0,a_1,a_2,b_1,b_2,c_1,c_2\}$ with area equal or larger than $a_0b_0c_0$. 
This resulted in a set of ${{9}\choose{3}} -1$ nonlinear constraints (many of which are redundant by Lemma \ref{lem:interl}) which delineate a very small but non-empty solution space. We were able to find an integer solution with 4-digit integers, but none with 3-digit integers.



\section{Some observations on the largest-area triangle}
 In this section, we argue that Algorithm~\ref {alg:triangle} does work on convex polygons in which there exists only one 2-stable triangle per vertex.
However, if there are multiple 2-stable triangles rooted at the same vertex, the algorithm only works if the first such triangle considered happens to be the largest one.
We investigate how many 2- and 3-stable triangles there can be, both per vertex and in total.

\subsection {The number of 2-stable triangles}


\begin{lemma}
	\label{lem:myinl}
	Let $r$ be any vertex of $P$.
	All 2-stable triangles rooted at $r$ are interleaving.
\end{lemma}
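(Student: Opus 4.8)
The plan is to reformulate the statement in terms of a \emph{farthest-vertex} (apex) map and exploit its monotonicity. Fix the root $r$ and regard the remaining vertices of $P$ in their cyclic boundary order, equivalently in their angular order as seen from $r$; since $r$ is itself a vertex of the convex polygon, this angular range spans strictly less than a half-turn, so angular order and boundary order coincide without wrap-around. For two rooted triangles $T_1=p_1q_1r$ and $T_2=p_2q_2r$ sharing the root $r$, I first observe that ``$T_1$ and $T_2$ interleave'' is equivalent to ``the chords $p_1q_1$ and $p_2q_2$ cross'', i.e. the four points $p_1,q_1,p_2,q_2$ alternate between the two triangles along the boundary: the two arcs incident to the common vertex $r$ are handled by $r$ itself, so the substantive content of interleaving is exactly this alternation. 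On the boundary circle the two non-alternating patterns (``nested'' and ``disjoint'' once we cut at $r$) are a single case, namely that $p_2$ and $q_2$ lie on the same arc determined by the chord $p_1q_1$; it is this case I must exclude.

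Next I characterize 2-stability. By definition, a rooted triangle $pqr$ is 2-stable iff, among all vertices of $P$, $p$ has maximum distance to the line $\overline{rq}$ and $q$ has maximum distance to the line $\overline{rp}$; by convexity this is the same as saying $P$ has a supporting line at $p$ parallel to $\overline{rq}$ and a supporting line at $q$ parallel to $\overline{rp}$. I introduce the apex map $\mathrm{ap}$ sending a vertex $v$ to the vertex maximizing the area of $\triangle r\,v\,(\cdot)$, i.e. the vertex of $P$ farthest from the line $\overline{rv}$. In this language, $pqr$ is 2-stable exactly when $\{p,q\}$ is a \emph{mutual} pair, $\mathrm{ap}(p)=q$ and $\mathrm{ap}(q)=p$; that is, a $2$-cycle of $\mathrm{ap}$.

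The engine of the proof is the monotonicity of $\mathrm{ap}$. The farthest vertex from $\overline{rv}$ is an extreme vertex of $P$ in the direction perpendicular to $\overline{rv}$, so $\mathrm{ap}$ is the composition of the monotone map $v\mapsto$ direction of $\overline{rv}$ with the Gauss map $\theta\mapsto$ extreme vertex in direction $\theta$, which is order-preserving around the boundary. Hence $\mathrm{ap}$ is an orientation-preserving, cyclically monotone self-map of the boundary: if $x,y,z$ occur in this cyclic order, then so do $\mathrm{ap}(x),\mathrm{ap}(y),\mathrm{ap}(z)$. I then close with a purely combinatorial crossing argument. Suppose, for contradiction, that two $2$-cycles $\{p_1,q_1\}$ and $\{p_2,q_2\}$ do not cross; after relabelling, the boundary exhibits the cyclic order $p_1,p_2,q_2,q_1$. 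Applying the orientation-preserving $\mathrm{ap}$ and using the $2$-cycle identities, the images $\mathrm{ap}(p_1),\mathrm{ap}(p_2),\mathrm{ap}(q_2),\mathrm{ap}(q_1)=q_1,q_2,p_2,p_1$ would have to occur in this same cyclic order. But $q_1,q_2,p_2,p_1$ is precisely the \emph{reverse} cyclic order of $p_1,p_2,q_2,q_1$, which is impossible for an orientation-preserving map unless some of the four points coincide. This contradiction shows that every two mutual pairs cross, hence every two 2-stable triangles rooted at $r$ interleave.

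The hard part will not be the crossing argument but making the apex map rigorous in the presence of degeneracies. There are two places to be careful: the choice of the correct perpendicular branch (the farthest vertex can a priori lie on either side of $\overline{rv}$, and one must verify that the mutual-pair condition selects a consistent, orientation-preserving branch), and the treatment of ties, where several vertices attain the maximal area and $\mathrm{ap}$ becomes multivalued. I would handle ties by treating $\mathrm{ap}$ as a monotone relation rather than a function, and I would note that the equality cases in the order-reversal contradiction are exactly the coincidences permitted by the ``possibly coinciding'' clause in the definition of interleaving, so that in degenerate configurations the conclusion weakens gracefully to weak interleaving rather than failing outright.
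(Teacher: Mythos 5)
Your combinatorial skeleton is sound and genuinely different from the paper's proof: the reduction of interleaving (for triangles sharing the root $r$) to crossing of the chords $p_1q_1$ and $p_2q_2$, the identification of 2-stable triangles with 2-cycles of the farthest-vertex map $\mathrm{ap}$, and the final order-reversal contradiction are all correct. The gap is exactly where you flag it, and it is not a deferrable technicality: cyclic monotonicity of $\mathrm{ap}$ does \emph{not} follow from writing it as the angle map composed with the Gauss map. That composition proves monotonicity of each \emph{one-sided} branch separately --- the farthest vertex to the left of the directed line $rv$ rotates counterclockwise with $v$, and likewise the farthest vertex to the right --- but $\mathrm{ap}(v)$ is defined as the better of the two branches, and it must jump between them at least once: when $v$ is the neighbor of $r$ on one side, every other vertex lies left of $rv$; when $v$ is the neighbor on the other side, every other vertex lies right. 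At such a switch $\mathrm{ap}$ jumps to a near-antipodal vertex, and a pointwise maximum of two order-preserving maps need not be order-preserving; if the farthest vertex could change sides three times (left, right, left, right), the images would violate cyclic order by exactly the kind of triple your contradiction exploits. So the statement you call the engine of the proof is, in substance, the lemma itself, and as written the proof is incomplete at its central step.

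The missing claim is true, but it needs its own argument. For example: let $L$ be a vertex left of the sweeping line and $R$ a vertex right of it, and let $x$ and $y$ be the angles at $r$ between the line and the rays $rL$ and $rR$; their distances to the line are $|rL|\sin x$ and $|rR|\sin y$, and as the line rotates by $\theta$ one has $\frac{d}{d\theta}\log\bigl(|rL|\sin x\,/\,|rR|\sin y\bigr)=-(\cot x+\cot y)=-\sin(x+y)/(\sin x\sin y)<0$, since $x+y$ is the angle $\angle LrR<\pi$ (here you use precisely that $r$ is a vertex of the convex polygon, so all other vertices lie in a wedge of opening less than $\pi$). Hence every left/right pair reverses its comparison at most once, from left-ahead to right-ahead, so the globally farthest vertex switches sides only once during the sweep; a short case check of the branch patterns $(+,+,+)$, $(+,+,-)$, $(+,-,-)$, $(-,-,-)$ then yields the cyclic monotonicity you need. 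With that inserted your proof closes, and it gives more than the paper does (it orders \emph{all} 2-stable triangles at a root simultaneously). The paper avoids any global statement: it takes the two non-interleaving configurations (nested and disjoint) and derives a contradiction for the specific points at hand, using two parallel lines through one vertex and the farthest-vertex property --- less machinery, but only the pairwise conclusion.
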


\begin {figure}
\includegraphics{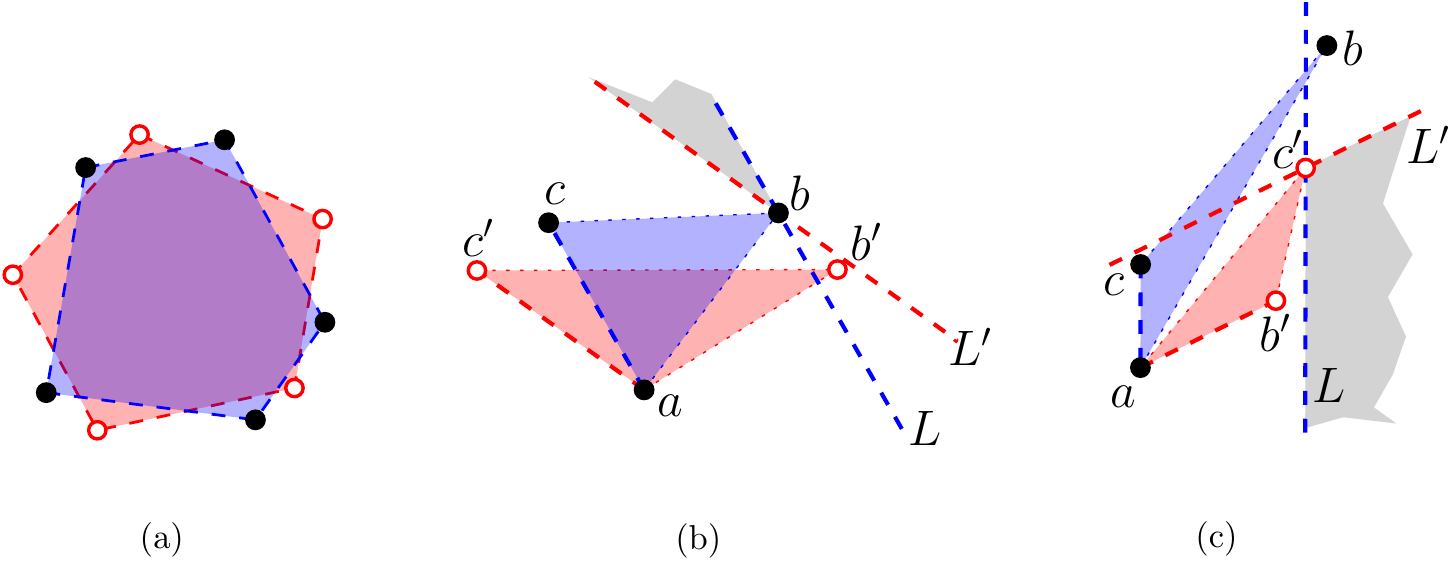}
\centering
\caption {(a) Interleaving polygons. (b,c) Stable triangles sharing a common root are interleaving. The gray areas can include vertex $b$. }
\label {fig:interleave-on-root}
\end {figure}

\begin{proof}
	Figure~\ref{fig:interleave-on-root}(b,c) illustrates the proof.
	Assume for the sake of contradiction that $P$ is a convex polygon and $a$ is a vertex on $P$ with several 2-stable triangles,
	such that there exists at least one 2-stable triangle $\triangle abc$ rooted at $a$ that is not interleaving with another 2-stable triangle $\triangle ab'c'$. W.l.o.g.\ both $\triangle abc$ and $\triangle ab'c'$ are ordered counterclockwise.
	
	First consider the case where $c$ occurs before $c'$ and $b'$ occurs before $b$ in counterclockwise order.
	Let $L$ and $L'$ be the lines through $b$ that are parallel to the lines $ac$ and $ac'$ respectively. 
	As $\triangle ab'c'$ is 2-stable, $b'$ has to be on the same side of $L$ than $a$. Additionally $b'$ has to be on the other side of $L'$ than $a$.
	It follows that $b'$ has to be placed after $b$ in counterclockwise ordering, contradicting the assumption.
	
	Second consider the case where $b'$ and $c'$ both occur before $b$ and $c$ in counterclockwise order.
	Let $L$ and $L'$ be the lines through $c'$ that are parallel to the lines $ac$ and $ab'$ respectively. As $\triangle abc$ is 2-stable, $b$ is on the other side of $L$ than $a$. As $\triangle ab'c'$ is 2-stable, $b$ is on the same side of $L'$ than $a$. This forces $b$ to be placed before $c'$ in counterclockwise ordering which again is a contradiction.
\end{proof} 

\begin{obs} \label {obs:linear}
	The number of 2-stable triangles rooted at any given vertex of a convex polygon is at most $ O(n)$.
\end{obs}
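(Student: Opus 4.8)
The plan is to leverage Lemma~\ref{lem:myinl}: since all $2$-stable triangles rooted at a fixed vertex $r$ pairwise interleave, it suffices to bound the size of any family of pairwise interleaving triangles that share the common vertex $r$. I would first set up a convenient encoding. Fix $r$ and index the remaining $n-1$ vertices of $P$ by their position $1,2,\dots,n-1$ in clockwise order starting just after $r$. Each triangle rooted at $r$ is determined by its two non-root vertices, which I record as a pair $(\beta,\gamma)$ of positions with $\beta<\gamma$. The whole argument then reduces to a purely combinatorial statement about a family of such pairs.

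The central step is to translate the interleaving condition into a statement about these pairs. A triangle $\triangle rbc$ splits the boundary of $P$ into the three arcs $(r,b)$, $(b,c)$ and $(c,r)$, and interleaving with a second triangle $\triangle rb'c'$ requires each of these arcs to contain a vertex of the second triangle. I would check case by case (nested, disjoint, crossing, and shared-endpoint configurations of the two position-pairs) that interleaving holds \emph{only} in the crossing case $\beta<\beta'<\gamma<\gamma'$, and in particular forces all four endpoints to be distinct. The nested and shared-endpoint cases each leave one of the three arcs empty, so they fail to interleave; the disjoint case fails likewise.

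Given this, I would conclude by a short counting argument. If the family consists of $m$ pairwise crossing pairs, then sorting by left endpoint gives $\beta_1<\dots<\beta_m$, and pairwise crossing forces both $\gamma_1<\dots<\gamma_m$ and $\beta_m<\gamma_1$. Hence the $2m$ positions $\beta_1,\dots,\beta_m,\gamma_1,\dots,\gamma_m$ are all distinct and drawn from $\{1,\dots,n-1\}$, so $2m\le n-1$ and $m\le (n-1)/2 = O(n)$.

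The main obstacle I anticipate is the careful handling of the interleaving definition, in particular the clause ``possibly coinciding with one of them'' and the treatment of degenerate ties between equal-area triangles; I would need to confirm that a shared endpoint between two rooted triangles genuinely violates interleaving, so that the crossing family really does occupy $2m$ distinct positions. As a sanity check and an alternative route that sidesteps interleaving entirely, I note that for fixed $r$ and fixed first vertex $b$ the area $\tfrac12 |rb|\cdot d(c,\mathrm{line}(rb))$ is a unimodal function of $c$ along the convex boundary, so $2$-stability pins down $c$ uniquely (up to ties); this makes the map sending each $2$-stable triangle to its vertex $b$ injective and yields the bound $n-1$ directly.
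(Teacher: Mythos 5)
Your closing ``sanity check'' paragraph is, in fact, the paper's own proof: the paper observes that for a fixed root $p_0$, an edge $p_0p_i$ can be an edge of at most two 2-stable triangles, because the third vertex must be a vertex farthest from the line through $p_0$ and $p_i$ (and a convex polygon has at most two such vertices), and summing over the $n-1$ choices of $p_i$ gives $O(n)$. So that fallback, promoted to the main argument, is a complete proof and coincides with the paper's.

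Your primary route, however, fails exactly at the point you flagged. The clause ``possibly coinciding with one of them'' cannot be dropped from the definition of interleaving: for two triangles sharing the root $r$, neither triangle ever has a vertex \emph{strictly} inside the other's arc between $r$ and its first non-root vertex, so under a strict reading Lemma~\ref{lem:myinl} would be false whenever a vertex has two or more 2-stable triangles. With the coinciding clause in force, your case analysis is wrong: a shared endpoint does \emph{not} violate interleaving. Concretely, $\triangle rbc$ and $\triangle rbc'$ interleave, since every arc of either triangle contains a (possibly coinciding) vertex of the other; and such pairs genuinely occur among 2-stable triangles, because if two vertices $c$ and $c'$ are both farthest from the line through $r$ and $b$ (a tie that the paper's ``smaller than (or equal to)'' definition permits), then both triangles can be 2-stable. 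Hence ``pairwise interleaving'' does not imply ``pairwise strictly crossing with all four endpoints distinct,'' and the count $2m\le n-1$ does not apply as written. What interleaving of two triangles with common root $r$ actually amounts to is that the chords $bc$ and $b'c'$ intersect as segments (they cross \emph{or} share an endpoint); a family of pairwise intersecting chords on $n-1$ points in convex position still has $O(n)$ members, so your strategy can be repaired, but that extremal statement needs its own proof and is strictly harder than your distinct-endpoints count. The cleaner repair is simply your own fallback: for each choice of the middle vertex $b$, 2-stability pins the third vertex to a farthest vertex from the line $rb$, of which there are at most two, giving at most $O(n)$ triangles rooted at $r$.
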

It is possible that for a given vertex $p_0$ on a convex polygon $P$, each edge $p_{0}p_i$ is an edge of a 2-stable triangle, as illustrated in Figure~\ref{fig:diff-manner}. 
Furthermore, Figure~\ref{fig:general-class} shows that any of these triangles could be the largest one; that is, the sequence of areas of the $2$-stable triangles rooted at $p_0$ is not necessarily increasing or decreasing. 
However, $p_0p_i$ can participate in at most two 2-stable triangles, namely, using the vertices that are farthest from the line through $p_0$ and $p_i$.
So, the number of 2-stable triangles rooted at any vertex of $P$ is $O(n)$. 

Furthermore, we can slightly alter the polygon in Figure~\ref{fig:general-class}(left) so that it becomes a polygon $P'$ with $O(n^2)$ $2$-stable triangles, see Figure~\ref{fig:general-class}(right). If we replace $p_0$ with $n$ new vertices $p_0, p_{0_1}, p_{0_2}, \ldots, p_{0_{n-1}}$, all on the boundary of $P$ (ordered clock-wisely) and close to each other, and such that the other points $p_i$ are far enough from each other, the resulting polygon $P'$ can have $O(n^2)$ 2-stable triangles.

Implicitly, Algorithm~\ref {alg:triangle} is based on the assumption that there is only a linear number of $2$-stable triangles that are comparable in size and to the size of the largest-area triangle.
We can alter the example in Figure~\ref {fig:general-class}(right) to make any of the $O(n^2)$ $2$-stable triangles 
the largest, but independent of such a change, Algorithm~\ref {alg:triangle} will always report the maximum area triangle out of the first found 2-stable triangles among all the vertices as the largest-area triangle.

Clearly, Observation~\ref {obs:linear} shows that the total number of 2-stable triangles is at most quadratic.

\begin{corollary}
	The total number of 2-stable triangles on a convex polygon is bounded by $O(n^2)$.
\end{corollary}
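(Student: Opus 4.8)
The plan is to obtain the bound as an immediate consequence of Observation~\ref{obs:linear} by partitioning the set of all 2-stable triangles according to their root. The key structural fact is that every 2-stable triangle comes, by definition, with a distinguished root vertex $r \in P$: the stability condition only asserts that neither of the two non-root vertices can be advanced to increase the area. Hence the collection of all 2-stable triangles on $P$ decomposes into $n$ groups, indexed by the $n$ possible choices of root.

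First I would invoke Observation~\ref{obs:linear}, which bounds the number of 2-stable triangles sharing any fixed root $r$ by $O(n)$. Next, summing this bound over all $n$ vertices of $P$ gives at most $n \cdot O(n) = O(n^2)$ rooted 2-stable triangles in total, which is the claimed bound.

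The only subtlety to address is whether one should count rooted objects or distinct geometric triangles: a single triple $\{p,q,r\}$ might conceivably satisfy the 2-stability condition with respect to more than one choice of its three vertices as root. Even in that case each geometric triangle is counted at most three times, so the asymptotic bound $O(n^2)$ is unaffected either way. I expect there to be essentially no hard step here; the corollary is a direct aggregation of the per-vertex bound, and the only care required is to confirm that the decomposition by root is exhaustive — that every 2-stable triangle is counted in (at least one, and at most three of) the per-root groups — which is immediate from the definition of a rooted 2-stable triangle.
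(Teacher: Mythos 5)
Your proof is correct and takes essentially the same route as the paper: the paper also obtains this corollary as an immediate aggregation of Observation~\ref{obs:linear}, summing the per-root $O(n)$ bound over the $n$ possible roots. Your additional remark that a single geometric triangle might be counted up to three times (once per admissible root) is a harmless refinement that does not affect the $O(n^2)$ bound.
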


In~\cite{kluv} we show that correcting the Dobkin and Snyder algorithm to find all the 2-stable triangles results in a quadratic time algorithm.
\begin {figure}
\includegraphics{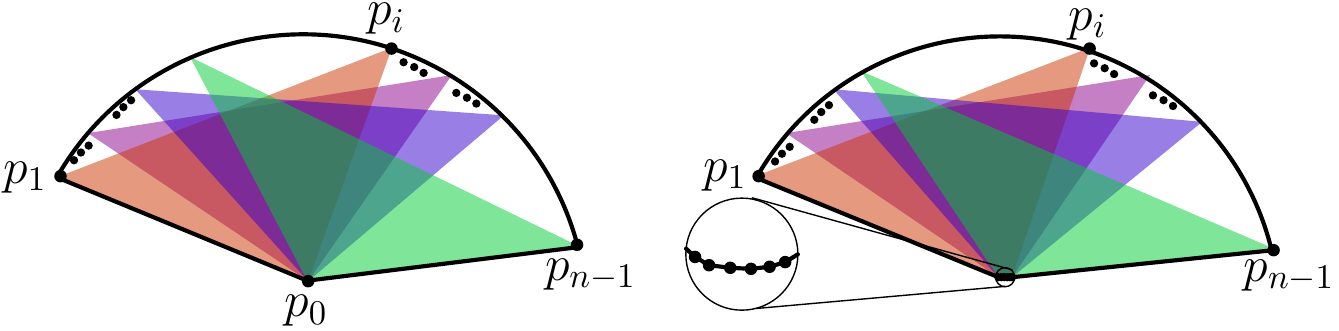}
\centering
\caption { General class of the polygon triangle algorithm~\cite{45} fails.}
\label {fig:general-class}
\end {figure}

\begin {figure}
\includegraphics{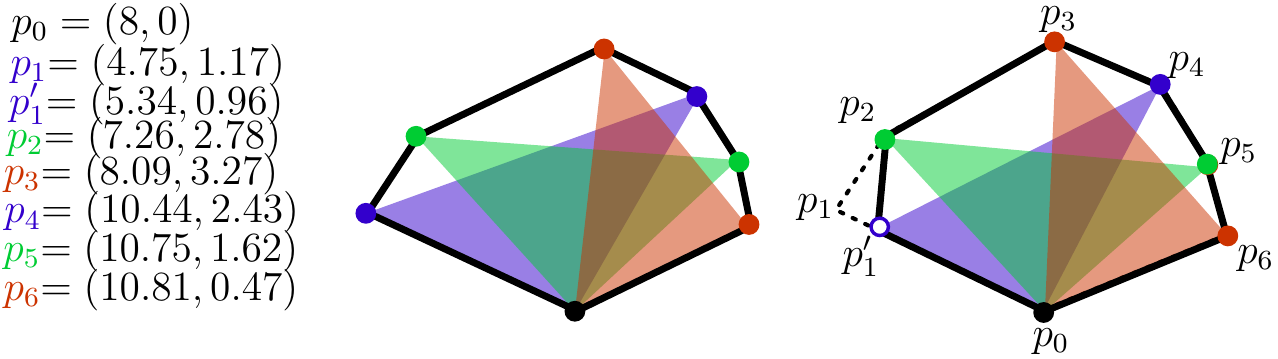}
\centering
\caption { A polygon with three $2$-stable triangles on a vertex; the blue triangle is the largest. If we move one vertex, there are still three $2$-stable triangles, but now the blue triangle is the smallest. 
}
\label {fig:diff-manner}
\end {figure}

\subsection {The number of 3-stable triangles}

\begin{lemma} Two 3-stable triangles on a convex polygon are always interleaving. \end{lemma}
\begin {figure}
\includegraphics{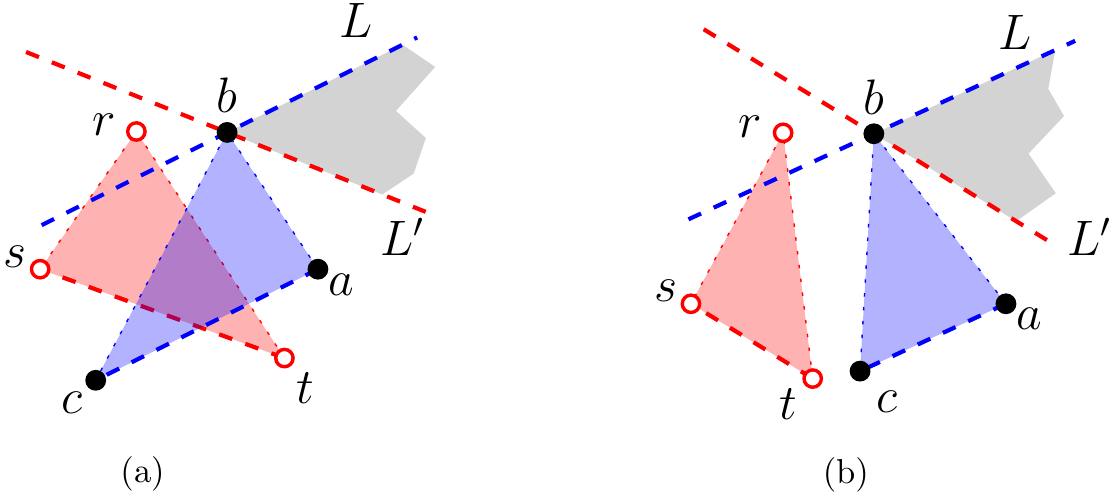}
\centering
\caption {The two cases we consider in order to prove that two 3-stable triangles interleave.}
\label {fig:Two_cases_stable_triangles_intersect}
\end {figure}
\begin{proof}
Let $\triangle abc$ and $\triangle rst$ be 3-stable triangles. Assume for the sake of contradiction that they are not interleaving. Following Lemma~\ref{lem:myinl} we know that the two triangles do not have a common vertex.
So either two vertices of $\triangle rst$ are between two vertices of $\triangle abc$ in counterclockwise order or the all three vertices of $\triangle rst$ are between two vertices of $\triangle abc$ in counterclockwise order.
W.l.o.g.\ we can assume that the counterclockwise order or the vertices is either $a, b, r, s, c, t$ or $a, b, r, s, t, c$.
The proof is the same for both cases and can be seen in Figure~\ref{fig:Two_cases_stable_triangles_intersect}. For the sake of simplicity we assume that the line $sa$ is horizontal and that the vertices $c$ and $t$ lie below that line.
Let $L$ and $L'$ be the lines through $b$ that are parallel to the lines $ac$ and $st$ respectively.
The vertex $r$ is closer to $ac$ than $b$ and $r$ is further away from $st$ than $b$. However due to $L$ having an upward slope and $L'$ having a downward slope, $r$ has to be before $b$ in cyclic ordering, which is a contradiction.
\end{proof}

\begin{lemma} The total number of 3-stable triangles on a given convex polygon is bounded by $O(n)$. \end{lemma}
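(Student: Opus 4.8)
The plan is to leverage the interleaving property just established. By the preceding lemma, any two 3-stable triangles interleave, so the entire collection of 3-stable triangles forms a family of pairwise interleaving triples of vertices of $P$. I would therefore reduce the statement to the purely combinatorial claim that \emph{a family of pairwise interleaving triangles inscribed in a convex $n$-gon has size $O(n)$}; at this point the geometry has done its work, and only a counting argument remains.

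To prove this combinatorial claim, first fix a point $p_0$ on the boundary of $P$ that is distinct from every vertex, and read off the vertices of $P$ as positions $0, 1, \ldots, n-1$ in counterclockwise order starting from $p_0$. Each 3-stable triangle $T$ is then encoded as its sorted triple of positions $\sigma(T) = (x_T, y_T, z_T)$ with $x_T < y_T < z_T$. The key step is to show that \emph{interleaving implies coordinatewise comparability}: for any two triangles $T, T'$, either $x_T \le x_{T'}$, $y_T \le y_{T'}$, $z_T \le z_{T'}$, or else the three reverse inequalities all hold. This follows because cutting the cyclic alternating pattern at $p_0$ yields a \emph{linear} alternating sequence of the three positions of $T$ and the three of $T'$; such a sequence must read either $x_T\, x_{T'}\, y_T\, y_{T'}\, z_T\, z_{T'}$ or $x_{T'}\, x_T\, y_{T'}\, y_T\, z_{T'}\, z_T$, and each case yields exactly one of the two chains of inequalities.

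Comparability of every pair turns the family into a chain in the coordinatewise (product) order: listing the triangles as $T_1 \prec T_2 \prec \cdots \prec T_m$, each of the three coordinate sequences $(x_{T_i})_i$, $(y_{T_i})_i$, $(z_{T_i})_i$ is nondecreasing and takes values in $\{0, \ldots, n-1\}$. Hence each coordinate can strictly increase at most $n-1$ times, giving a total budget of $3(n-1)$ strict increases along the whole chain. Since consecutive triangles are distinct, their sorted triples differ, so every step strictly increases at least one coordinate; therefore $m - 1 \le 3(n-1)$ and thus $m \le 3n - 2 = O(n)$, which is precisely the bound we want.

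The main obstacle is the comparability step, and in particular its two degeneracies. First, I must check that a single base point $p_0$ works \emph{simultaneously} for all pairs; this is fine because the linearization argument is local to each pair and never refers to how other triangles sit relative to $p_0$, so no global ``phase'' consistency is needed even though the triangles may rotate all the way around $P$. Second, I must handle pairs of 3-stable triangles that share a vertex, which, as the $c_0$ example shows, do occur; here the interleaving is witnessed with a coincidence (via Lemma~\ref{lem:myinl}), one coordinate becomes an equality rather than a strict inequality, and the argument carries over verbatim with $\le$ in place of $<$, leaving the increment count unaffected. Once these two points are verified, the bound follows.
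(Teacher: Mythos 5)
Your proof is correct, and it takes a genuinely cleaner route than the paper's. Both arguments rest on the same geometric input---the preceding lemma that any two 3-stable triangles interleave---but the counting is organized differently. The paper sweeps the root around the polygon: for each vertex it considers the 3-stable triangles rooted there, argues from interleaving that their second and third vertices can only advance in clockwise order as the root advances, and bounds the total by a telescoping sum of index ranges; this per-root bookkeeping (first/last triangle per root, ranges $i_j$) is only sketched and is hard to make fully precise. You instead extract a purely combinatorial statement: a family of pairwise interleaving vertex triples of a convex $n$-gon has size $O(n)$. Cutting the cyclic order at a non-vertex base point, showing that interleaving forces coordinatewise comparability of the sorted triples, and charging each new triangle to a strict increase of one of three monotone coordinates gives the explicit bound $m \le 3n-2$, with the geometry and the combinatorics neatly separated. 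The one step that deserves care is exactly the one you flag: when two 3-stable triangles share a vertex (as happens with the two 3-stable triangles through $c_0$ in the counterexample polygon), the cyclic pattern alternates only up to ties, so the linearization into one of two strict patterns is not literally available. Rather than breaking ties, it is cleanest to prove comparability directly: if the sorted triples were incomparable, then in every case two cyclically consecutive vertices of one triangle would bound a closed arc containing no vertex of the other (e.g., $x_T < x_{T'} \le y_{T'} < y_T$ leaves the arc from $x_{T'}$ to $y_{T'}$ empty of vertices of $T$), contradicting interleaving; this disposes of all coincidence cases uniformly and is the same observation your tie-breaking implicitly appeals to. With that step made explicit, your argument is complete and, unlike the paper's, yields a rigorous proof with an explicit constant.
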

\begin{proof}

Suppose $P=a_0,...,a_{n-1}$ is the convex  polygon. Let $P$ given in the clockwise ordering. Let $X_0$ be the number of 3-stable triangles  rooted at $a_0$, and let $x_{0,1}$ and $x_{0,n_{a_0}}$ denote, respectively, the first and last   3-stable triangles rooted at $a_0$, etc. 
Interleaving property of 3-stable triangles follows that if we  move  $a_0$ forward to $a_1$, the  second and third vertices of  triangle $x_{1,1}$  can only be located after or on the second and third vertices of triangle $x_{0,1}$. Also exactly one triangle rooted at $a_1$ can share both of the second and third vertices with one 3-stable triangle rooted at $a_0$, otherwise contradict the convexity of $P$. Note that the same argument is hold for all the successive vertices of $a_1$.

If there is a triangle $x_{1,j}$ with a common second vertex with $x_{0,i}$, then the third vertex of $x_{1,j}$  either coincides  with the third vertex of $x_{0,i}$, or moves forward on the clockwise ordering of $P$. Thus the range of movement of the second vertex of 3-stable triangles rooted at $a_1$ is bounded by the second vertices  of triangles $x_{0,1}$ and $x_{0,n_{a_0}}$ in the clockwise ordering, and the range of movement of the third vertex of 3-stable triangles rooted at $a_1$ is bounded by the third vertices  of triangles $x_{0,1}$ and $a_{n-1}$.
Let $i_0$ denote the index of the second vertex of $x_{0,n_{a_0}}$ and so on. 

Obviously we never passed through $a_{n-1}$. Also $i_j$ for $j=0,...,n-1$ are ordered by their indices. Also since the 3-stable triangles are interleaving, the range of the indices we trace by the third vertex of any 3-stable triangles do not overlap (or overlap with at most one vertex). All together   $\sum_{j=0}^{{n-1}} (i_{j+1}-i_j)$ is bounded by $O(n)$, and thus the total number of 3-stable triangles is bounded by $O(n)$.

\end{proof}


\section{A quadratic-time triangle algorithm} \label {sec:quadt}
In this section, we present a quadratic-time algorithm to find the largest-area inscribed triangle. Let $P=\{p_0, p_1, \ldots, p_{n-1}\}$ be a given convex polygon.
Recall that the largest-area triangle is 3-stable.
The idea of the algorithm is to find all 3-stable triangles in $P$: for each vertex $p_i$ we find all 2-stable triangles rooted at $p_i$ in a single linear pass; because all 3-stable triangles are also 2-stable for some vertex, we find all 3-stable triangles.

In step $i$ of Algorithm~\ref {alg:trianglequad}, we let $a = p_i$ be the root, and start searching from $a$ and its two subsequent vertices $b$ and $c$ on $P$. In contrast to Algorithm~\ref {alg:triangle}, each time we move $a$, we reset $b$ and $c$, but just like in Algorithm~\ref {alg:triangle}, each time we move $b$, $c$ stays where it is. 
This means each step of the algorithm now takes linear time, and the total algorithm takes quadratic time.

\subsection {Correctness}

We will start the correctness proof of Algorithm~\ref {alg:trianglequad} by the following lemma.

\begin{lemma} \label{quadl}
 Algorithm~\ref{alg:trianglequad}  considers all 2-stable triangles.
\end{lemma}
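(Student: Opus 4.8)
The plan is to prove that Algorithm~\ref{alg:trianglequad} considers every 2-stable triangle by showing that for each root vertex $p_i$, the single linear pass with $a = p_i$ fixed encounters each 2-stable triangle rooted at $p_i$. Since every 2-stable triangle has some vertex as its root and the algorithm iterates over all $n$ choices of root, this local claim suffices.

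First I would fix a root $a = p_i$ and consider the sequence of pairs $(b,c)$ that the inner loop visits during step $i$. The key structural fact I would lean on is Lemma~\ref{lem:myinl}: all 2-stable triangles rooted at $a$ are interleaving, so they can be linearly ordered by the clockwise position of their $b$-vertex (equivalently their $c$-vertex), and as we advance through this order \emph{both} stable vertices move monotonically forward along the boundary. This monotonicity is exactly what a single forward sweep of $b$ and $c$ (never backtracking) is designed to capture. I would argue that the algorithm's movement rule — advance $c$ while it increases the area, then advance $b$ if that increases the area, resetting nothing except when the root moves — traces out precisely this monotone sequence of stable configurations.

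The core of the argument is a local optimality invariant. For the current pair $(b,c)$, once $c$ can no longer be advanced we know $\triangle abc \geq \triangle ab\,\texttt{next}(c)$, and by convexity of $P$ the area of $\triangle abc'$ is a unimodal (concave) function of $c'$ as $c'$ sweeps forward; hence $c$ sits at the $c$-maximizer for the current $b$, and symmetrically for $b$. When neither $b$ nor $c$ can advance, $\triangle abc$ is therefore 2-stable by definition. I would then show the converse direction — that \emph{no} 2-stable triangle is skipped — by a monotonicity/exchange argument: if $\triangle a b^\ast c^\ast$ is 2-stable and the sweep's current $b$ has passed $b^\ast$, then unimodality of the area in each coordinate together with the interleaving order forces a contradiction, because the sweep would have halted at $b^\ast$ (or at a configuration with the same or larger area sharing that stable vertex) before advancing past it.

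The main obstacle I anticipate is handling the interaction between the two pointers cleanly: the area as a function of $c$ depends on the current $b$, so I must be careful that advancing $b$ does not cause the sweep to "jump over" a stable triangle whose $c$-vertex lies behind the current $c$. This is where the interleaving lemma does the real work — it guarantees that the $c$-optimizer is itself monotone in $b$, so $c$ never needs to move backward. I would also need to address the non-strict inequalities ($\geq$ versus $>$) and degenerate ties where several triangles share the same area, noting as the statement following the definitions already does that in such cases the algorithm may report any one of the tied stable triangles, which is sufficient since we only need to \emph{consider} (not uniquely identify) each 2-stable triangle and ultimately recover the maximum.
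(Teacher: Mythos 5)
Your proposal is correct and follows essentially the same route as the paper: the paper's proof also fixes the root, supposes some 2-stable triangle $\triangle rst$ is missed during the pass with $a=r$, and derives a contradiction from Lemma~\ref{lem:myinl} (interleaving of 2-stable triangles sharing a root) together with unimodality and the definition of 2-stability, organized as a case analysis on whether $b$ reaches $s$ and whether $t$ lies before or after the vertices swept by $c$ while $b=s$. In particular, the ``jump-over'' obstacle you flag is exactly the paper's first case ($c$ passes $t$ before $b$ reaches $s$), and the paper resolves it the same way you propose --- by invoking the interleaving lemma on the offending pair of triangles --- so the two arguments coincide in both the key lemma and the contradiction structure.
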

\begin{proof}
Suppose the lemma is false.
Then, there exists at least one 2-stable triangle $\triangle rst$ rooted at $r$ that the algorithm cannot find when $a=r$.
First assume that during the algorithm, at some point, $b$ will reach $s$. While $b$ is at $s$, $c$ will traverse some sequence $X$ of vertices of $P$; let $x$ be the first vertex of $X$. If $t$ is not in $X$, there are two cases: $t$ comes before $X$, or $t$ comes after $X$.

If $t$ comes before $X$, then $c$ already passed $t$ before $b$ reached $s$. This means that $b$ was at some point $u$ when $c$ passed $t$. But now, $\triangle rux$ and $\triangle rst$ would both be $2$-stable, but not interleaved, contradicting Lemma~\ref {lem:myinl} (see Figure~\ref{fig:all2stable}(left)).

If $t$ comes after $X$, then $b$ already moved away from $s$ before $c$ reaches $t$, say, when $c$ was at another vertex $v$. But then, $\triangle rsv$ was $2$-stable. However, $\triangle rst$ is also $2$-stable, contradicting the definition of $2$-stability.

Now suppose we missed $\triangle rst$ because $b$ did not reach $s$ before $c$ reaches $a-1$.
But then, $c$ passed both $s$ and $t$, so when this happens $\triangle rbc$ and $\triangle rst$ are not interleaved, a contradiction with  Lemma~\ref{lem:myinl} (see Figure~\ref{fig:all2stable}(right)).
\end{proof}

\begin {figure}
  \includegraphics{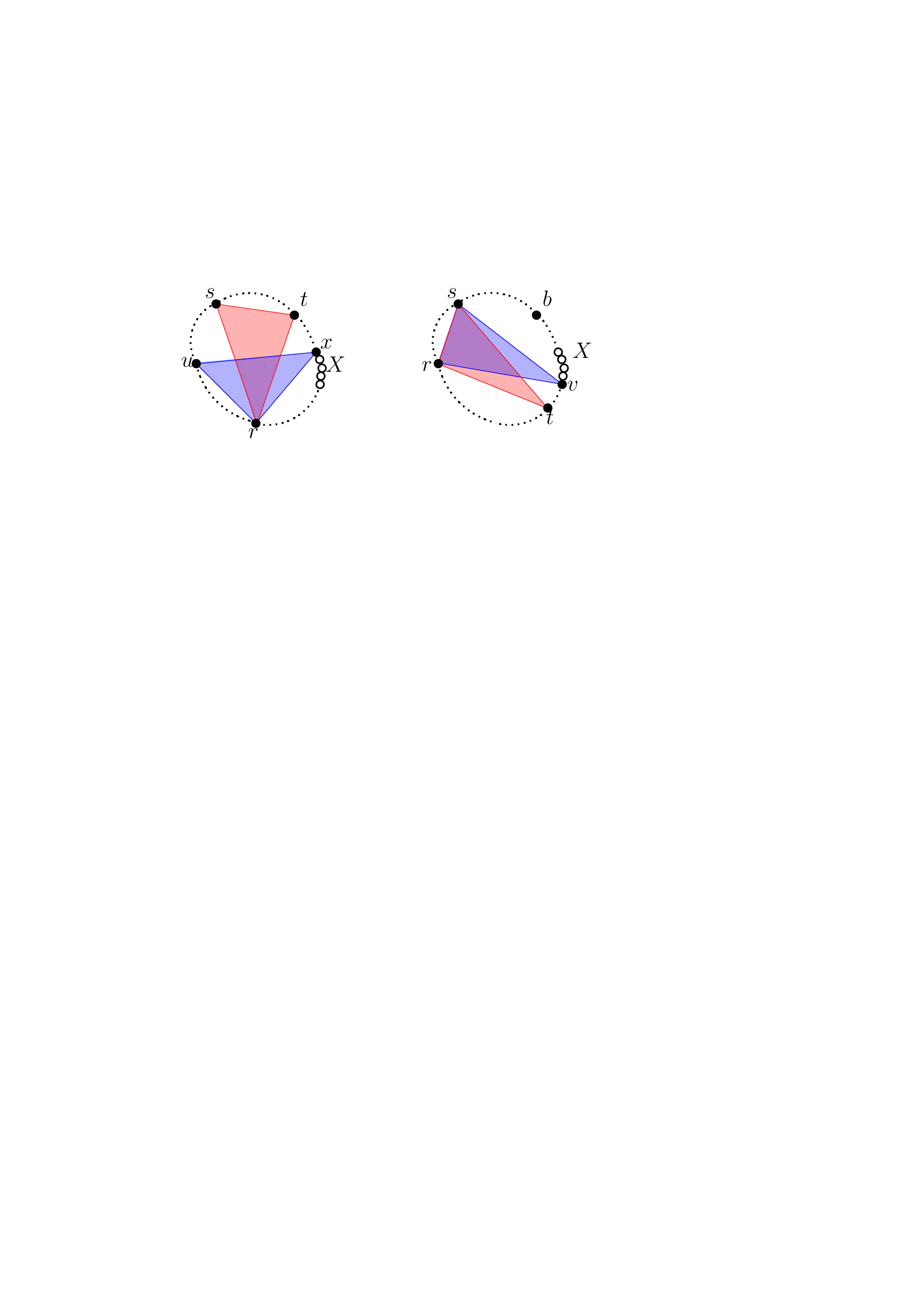}
   \centering
  \caption {Illustration of Lemma~\ref{quadl}. }
  \label {fig:all2stable}
\end {figure}

\begin{algorithm}[H] 
\caption{Quadratic-time triangle algorithm}
	\label {alg:trianglequad}
	{\bf Input} {$P={p_0,\ldots,p_n}$: a convex polygon, $p_0$: a vertex of $P$}\\
	{\bf Output} {$\Lambda$: Maximum-area triangle}\\
	a = $p_0$\\
	b = \texttt{\textit{next}}(a)\\
	c = \texttt{\textit{next}}(b)\\
	m = $\triangle$ abc\\
	\While{True}
	{
		
		\While{$c \ne a$}
		{
			\While{$\triangle$ab\texttt{next}(c) $\geq $ $\triangle$abc}
			{
				c = \texttt{\textit{next}}(c)\\
			}
			
			\If{$\triangle$abc $\geq $$\Lambda$}
		    {
			m = $\triangle$abc
		    }
				b = \texttt{\textit{next}}(b)\\
		}

			a = \texttt{\textit{next}}(a) \\
	\If{a=r}
	{
	\Return m\\
	} 
	b = \texttt{\textit{next}}(a) \\
	c = \texttt{\textit{next}}(b) \\
		
}
	
\end{algorithm}

\begin{theorem}
Algorithm~\ref{alg:trianglequad} will find the largest-area triangle in $O(n^2)$ time.
\end{theorem}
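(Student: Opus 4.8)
The plan is to split the statement into a correctness claim and a running-time claim, with essentially all of the geometric work already discharged by Lemma~\ref{quadl}.

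For correctness, I would argue that the returned value $m$ is, by construction, the maximum area over the set of triangles the algorithm explicitly evaluates in the test $\triangle abc \ge m$, so it suffices to show that some globally maximum-area triangle is among the evaluated triangles. First I would recall that $\Lambda$ is $3$-stable (noted in the preliminaries) and observe directly from the definitions that a $3$-stable triangle is in particular $2$-stable when rooted at any one of its three vertices: the $3$-stable requirement that no single vertex can be advanced to increase the area is strictly stronger than the $2$-stable requirement for a fixed root. Hence, writing $\Lambda=\triangle rst$, the triangle $\Lambda$ is a $2$-stable triangle rooted at $r$. By Lemma~\ref{quadl} the algorithm considers \emph{every} $2$-stable triangle, so during the outer iteration with $a=r$ it evaluates $\Lambda$. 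Since $\Lambda$ has maximum area and $m$ only ever increases toward the largest evaluated area, the recorded value equals the area of $\Lambda$ and the stored triangle is a maximum-area inscribed triangle.

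For the running time, I would analyze the loop structure for a fixed root $a$. Within one outer iteration the pointer $b$ advances monotonically starting from $\texttt{next}(a)$, and, crucially, $c$ is never reset as $b$ moves; it only advances through the inner \texttt{while} test on $c$. Thus, over the entire inner pass for this fixed $a$, both $b$ and $c$ make at most $n$ forward moves before $c$ reaches $a$, and each area comparison is $O(1)$, so each outer iteration costs $O(n)$ time. Summing over the $n$ choices of the root $a$ gives the claimed $O(n^2)$ bound.

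I expect the only genuine subtlety, and the step I would be most careful to state cleanly, is the reduction from ``$\Lambda$ is $3$-stable'' to ``$\Lambda$ is $2$-stable at one of its vertices and is therefore enumerated'': the real content of \emph{actually} enumerating all $2$-stable rooted triangles is carried by Lemma~\ref{quadl}, whose proof in turn leans on the interleaving property of Lemma~\ref{lem:myinl}. Everything else is bookkeeping, namely verifying the pointer-movement invariant that keeps $c$ monotone within each fixed-root pass, which is exactly what prevents the per-root cost from exceeding linear.
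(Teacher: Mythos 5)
Your proof is correct and follows essentially the same route as the paper's: it rests on the same three facts (the largest-area triangle is $3$-stable, $3$-stability implies $2$-stability at each vertex, and Lemma~\ref{quadl} guarantees the algorithm enumerates all $2$-stable triangles). If anything, your version is tighter than the paper's own, since you spell out the $O(n^2)$ pointer-movement analysis that the paper relegates to the surrounding text, and you correctly attribute the enumeration step to Lemma~\ref{quadl}, where the paper's proof mistakenly cites Lemma~\ref{lem:myinl}.
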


\begin{proof}
The correctness of the algorithm depends on three facts. First,  the largest-area triangle is always a 3-stable triangle;  second, in the above procedure we will find all 3-stable triangles; and third, the set of all 3-stable triangles is a subset of the set of all 2-stable triangles.
The correctness of the first and third facts are obvious. In Lemma~\ref{lem:myinl} we  proved we will consider all the 2-stable triangles. Thus we can conclude the  Algorithm~\ref{alg:trianglequad} works correctly.

\end{proof}

 


\section{A divide-and-conquer triangle algorithm} \label {sec:dandc}
In this section, we will provide a more efficient algorithm for finding the largest-area   inscribed triangle on a convex polygon.
We will use the following previously established lemmas.

\begin{lemma} [{\cite[Lemma 2.2]{48}}]
\label{lem:interl}
  A globally largest-area $k$-gon and a largest-area rooted $k$-gon interleave. 
\end{lemma}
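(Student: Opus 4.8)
The plan is to argue by contradiction in exactly the spirit of Lemma~\ref{lem:myinl}, reducing the statement to a single clash of local-optimality constraints. Write $Q^\star$ for the global maximum $k$-gon and $Q_r$ for a largest-area $k$-gon rooted at $r$; both are $P$-aligned convex $k$-gons. If $r$ happens to be a vertex of $Q^\star$, then $Q^\star$ is itself a rooted $k$-gon at $r$, so we may take $Q_r = Q^\star$ and there is nothing to prove; hence I would assume the two polygons are distinct.

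First I would recast non-interleaving as a clean combinatorial configuration. Since $Q^\star$ and $Q_r$ each have exactly $k$ vertices on $\partial P$, the $k$ open arcs delimited by consecutive vertices of $Q_r$ partition the $k$ vertices of $Q^\star$ (coincident vertices being the only edge case, and these already satisfy the between-condition). If interleaving fails, some such arc is empty, and then by pigeonhole some arc $(u,v)$, with $u$ and $v$ consecutive on $Q_r$, must contain at least two vertices of $Q^\star$; by symmetry of the vertex counts we may assume the deficiency is on the $Q_r$ side. Let $x,y$ be the first two vertices of $Q^\star$ encountered inside $(u,v)$; they are consecutive on $Q^\star$, and the $Q^\star$-predecessor $w$ of $x$ and the $Q^\star$-successor $z$ of $y$ both lie outside the arc.

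The second step is to extract enough local optimality. Because $Q^\star$ is globally optimal, every one of its vertices is the farthest point of its sub-arc of $\partial P$ from the line through its two polygon-neighbors; in particular a supporting line of $P$ at $x$ is parallel to $wy$, and one at $y$ is parallel to $xz$. The same farthest-point condition holds at every \emph{non-root} vertex of $Q_r$. Here the root is the one delicate point, and I would dispose of it as follows: since $u$ and $v$ are consecutive vertices of $Q_r$ and $r$ is itself a vertex of $Q_r$, the root cannot lie strictly inside $(u,v)$, so at least one of $u,v$ — say $v$ — is a non-root vertex and therefore satisfies the farthest-point condition with respect to the line through its $Q_r$-neighbors $u$ and $v^{+}$.

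Finally I would drive these constraints into a contradiction exactly as in Lemma~\ref{lem:myinl}: draw the lines through the appropriate vertex parallel to $wy$, to $xz$, and to $uv^{+}$, and use the two half-plane memberships forced by the farthest-point conditions at $y$ and at $v$. Because $x$ and $y$ both lie on the arc strictly between $u$ and $v$, convexity squeezes the slope of the $Q^\star$-edge $xy$ between the supporting directions at $x$ and $y$, while optimality of $v$ forces $v$ to be strictly farther from the line $uv^{+}$ than either $x$ or $y$; tracing the resulting cyclic order then places one of $x,y$ on the wrong side of $\partial P$, contradicting convexity. The hard part will be precisely this last bookkeeping: isolating the two local conditions that genuinely clash (rather than the many that are mutually consistent) and making sure the chosen pair avoids the root. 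The counting reduction and the observation that the root is never strictly inside the offending arc are what make this step manageable.
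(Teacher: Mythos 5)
The paper never proves this lemma: it is imported verbatim as Lemma~2.2 of Boyce \emph{et al.}~\cite{48}, where it is established by an exchange argument that uses the actual \emph{maximality} of both polygons. Your route uses only local ``farthest-point'' (stability) conditions, and the decisive step you defer (``the hard part will be precisely this last bookkeeping'') is not bookkeeping --- it cannot be completed from the conditions you name. The clash you describe between the conditions at $y$ and at $v$ only exists when the chord $xz$ (with $z$ the $Q^\star$-successor of $y$) crosses the chord $uv^{+}$. In Lemma~\ref{lem:myinl} the analogous crossing is automatic, because there the two triangles share the root and all relevant chords pass through it; here it holds only when $v^{+}$ lies in the arc $(v,z)$, and fails whenever $z$ lies in $(v,v^{+})$. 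In that case the two conditions are mutually consistent in exactly your forbidden configuration: take $u=(0,0)$, $x=(1,-2)$, $y=(2,-3.9)$, $v=(5,-4)$, $z=(8,-3)$, $v^{+}=(10,0)$, which are in convex position in this counterclockwise order, so $x,y$ are consecutive $Q^\star$-vertices strictly inside the $Q_r$-arc $(u,v)$. The distances from $x$, $y$, $z$ to the line $uv^{+}$ (the $x$-axis) are $2$, $3.9$, $3$, all strictly less than $v$'s distance $4$, so the farthest-point condition at $v$ holds over its whole sub-arc; and the distances from $y$ and $v$ to the line $xz$ are proportional to $12.3$ and $10$, so the condition at $y$ holds strictly as well. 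Hence no tracing of cyclic order can produce the contradiction you predict; a correct proof must either invoke stability at further vertices ($w$, $z$, $u$, $\ldots$) together with the pigeonhole fact that some other $Q_r$-arc is then empty --- a walk along both polygons during which one must also avoid the root --- or abandon purely local conditions and use maximality itself via an exchange, as \cite{48} does.

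There are also smaller but real problems earlier in the argument. The two failure modes of interleaving (an empty $Q_r$-arc versus an empty $Q^\star$-arc) are not related ``by symmetry of the vertex counts'': they require different reductions (pigeonhole for the former, bracketing the empty arc for the latter), and when vertices of the two polygons coincide the pigeonhole version does not yield two consecutive $Q^\star$-vertices \emph{strictly} inside one $Q_r$-arc. Coincidences are not a negligible edge case here: they occur precisely when the root is a vertex of $Q^\star$, the situation you dismiss by ``taking'' $Q_r=Q^\star$ --- which is not allowed, since $Q_r$ is a given largest rooted polygon, not one you may choose. Likewise, if more than two $Q^\star$-vertices lie in $(u,v)$, the successor $z$ of $y$ need not lie outside the arc, contrary to your claim. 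Finally, a warning in principle: a proof from stability conditions alone would establish interleaving for \emph{all} merely stable polygons, a strictly stronger statement than the lemma; since the central message of this paper is that stability is far weaker than maximality (this is exactly why the Dobkin--Snyder algorithm fails), any such strengthening demands independent verification rather than an appeal to the mechanism of Lemma~\ref{lem:myinl}.
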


\begin{lemma}
\label{lart}
The  largest-area rooted triangle can be found in linear time.
\end{lemma}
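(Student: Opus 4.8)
The plan is to show that a single linear-time sweep, with the root fixed and the two other vertices advanced by a monotone two-pointer scheme, already examines every candidate. First I would observe that the largest-area rooted triangle at $r$ must be 2-stable: if $\triangle rpq$ has maximum area among all $P$-aligned triangles containing $r$, then in particular neither $\triangle rp'q$ nor $\triangle rpq'$ can be larger, which is exactly the definition of 2-stability. Hence it suffices to inspect all 2-stable triangles rooted at $r$ and keep the largest.

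Next I would order the vertices of $P$ clockwise as $r = p_0, p_1, \ldots, p_{n-1}$ and consider, for each choice of the second vertex $b = p_i$, the area of $\triangle r b c$ as a function of the third vertex $c = p_j$. By convexity of $P$, the perpendicular distance from $c$ to the supporting line of $rb$ is unimodal as $c$ runs clockwise from $b$ toward $r$, so the area is unimodal in $j$; thus the area-maximizing third vertex $c^{*}(b)$ is located by advancing $c$ while the area strictly increases. The algorithm iterates $b$ over all admissible positions and, crucially, never resets $c$: when $b$ advances, $c$ resumes from where it stopped. Since both pointers only move forward over $O(n)$ vertices in total, the whole sweep runs in linear time, and we record the maximum area encountered.

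The correctness of never retracting $c$ rests on the monotonicity $c^{*}(b) \le c^{*}(b')$ whenever $b$ precedes $b'$, and this is the point where I expect the main difficulty to lie. I would establish it by invoking the interleaving of 2-stable triangles (Lemma~\ref{lem:myinl}): since all 2-stable triangles rooted at $r$ interleave, their apex vertices appear in the same cyclic order as their bases, so the optimal third vertex cannot regress as $b$ moves forward. Equivalently, placing $r$ at the origin, one checks directly that the area array $A(i,j) = \tfrac{1}{2}\,|p_i \times p_j|$ is totally monotone (inverse Monge) on the relevant index range, via the identity that the second difference $A(i,j) + A(i',j') - A(i,j') - A(i',j)$ is proportional to $(p_i - p_{i'}) \times (p_j - p_{j'})$, whose sign is fixed by convexity; monotone row maxima then follow. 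Either route shows the forward-only sweep meets every 2-stable triangle rooted at $r$.

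Finally I would note that this sweep is precisely the single-root inner pass of Algorithm~\ref{alg:trianglequad}, whose correctness (that it considers all 2-stable triangles rooted at a given vertex) is already guaranteed by Lemma~\ref{quadl}. Combining the three pieces — the optimum is 2-stable, the linear pass meets all 2-stable triangles rooted at $r$, and the pass runs in $O(n)$ time — yields that the largest-area rooted triangle can be found in linear time.
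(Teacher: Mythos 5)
Your proof is, in its load-bearing part, the same as the paper's: the paper establishes Lemma~\ref{lart} by running one root-fixed pass of Algorithm~\ref{alg:trianglequad} and appealing to Lemma~\ref{quadl} (the forward-only pass visits every 2-stable triangle rooted at $r$, and the largest rooted triangle is 2-stable), which is exactly the content of your first and last paragraphs. So the proposal is correct and takes essentially the paper's route.

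That said, the two ``equivalent'' side-justifications you offer for the monotonicity $c^{*}(b) \le c^{*}(b')$ are both flawed, and you should not let them stand. First, Lemma~\ref{lem:myinl} speaks only about 2-stable triangles, whereas the row-maximum triangle $\triangle r\,b\,c^{*}(b)$ need not be 2-stable (the vertex $b$ need not be optimal against $c^{*}(b)$), so interleaving of 2-stable triangles does not directly constrain the row maxima. Second, the claim that the sign of the second difference $(p_i - p_{i'}) \times (p_j - p_{j'})$ is ``fixed by convexity'' is false: convexity fixes the cyclic order of the vertices but not the relative orientation of the two backward chords. For vertices clustered on a circular arc the sign comes out one way, while for a needle-shaped polygon, e.g.\ $r=(5,0.1)$, $p_1=(7,0.09)$, $p_2=(10,0.05)$, $p_3=(0,0)$, $p_4=(3,0.08)$ in clockwise convex position, the rooted areas satisfy $A(1,3)+A(2,4)=0.225 < 0.405 = A(1,4)+A(2,3)$, so the matrix is not inverse Monge. (The area matrix is in fact totally monotone, but that is a strictly weaker property than Monge, and proving it requires the rotating-calipers fact that the vertex farthest from the line $rb$ advances monotonically as $b$ advances, not a second-difference computation.) Neither flaw damages your conclusion, because the appeal to Lemma~\ref{quadl} already carries the proof; simply delete or repair the Monge paragraph.
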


\begin{proof}
The largest area triangle rooted at an arbitrary fixed vertex  $a$ of $P$ can be found via one step of Algorithm~\ref{alg:trianglequad}. In the correctness proof of that algorithm we mentioned that we can find all the 2-stable triangles on any given  root $a$ in linear time. The largest-area rooted triangle can also be found in linear time.
\end{proof}

\subsection{Algorithm}
In the first step of the algorithm we  choose an arbitrary vertex $a$ on $P$ and compute  the largest-area triangle rooted at $a$. We call this triangle ${T^1}_a$.  ${T^1}_a$ decomposes the boundary of $P$ into three intervals that share their endpoints. 
Let $m$ be the median vertex on the largest of these intervals (largest in terms of complexity).
We then compute the largest-area rooted triangle   on $m$, ${T^1}_m$. We call ${T^1}_a$ and ${T^1}_m$ \textit{dividing triangles}. 
The vertices of ${T^1}_a$ and ${T^1}_m$ subdivide $P$ into six intervals that share their endpoints; Figures~\ref{fig:dandca} and \ref{fig:dandcb} show two possible configurations. 

Recall that, by Lemma~\ref{lem:interl}, the largest-area triangle $\Lambda$  must interleave with both ${T^1}_a$ and ${T^1}_m$.
This implies that, once we fix the interval that contains one vertex of $\Lambda$, the other two vertices are constrained to lie in two pairwise disjoint intervals.
Depending on the configuration, there could be either one or two sets of three compatible intervals; if there are two, they must interleave.
\begin {figure}
\centering
\subfloat[ Interleaving dividing triangles construct two subproblems.]{
\includegraphics{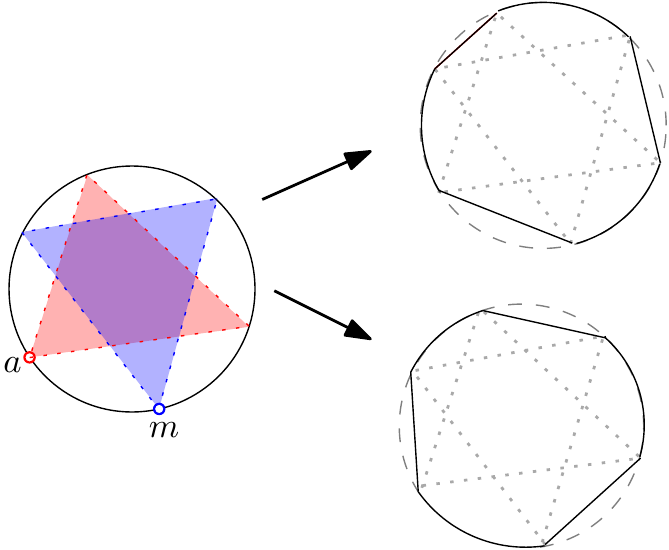}
 \label {fig:dandca}
}
\hspace{1cm}
\subfloat[ Non-interleaving dividing triangles construct one subproblem.]
{
\includegraphics{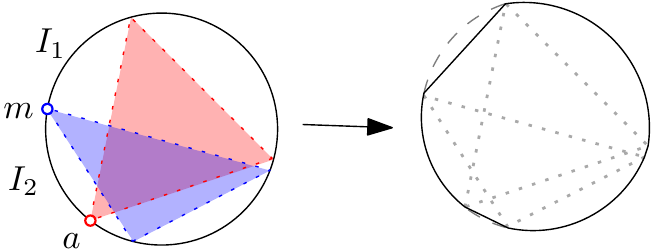}
 \label {fig:dandcb}
}

\caption {Dividing triangles and the resulting subproblems.}
\end {figure}
As a result, we construct either one or two smaller polygons $P'$ or $P'$ and $P''$ by directly connecting these intervals.

In the second step, we will repeat the above procedure again by finding another largest triangle $T_a^2$ rooted on an arbitrary vertex $a$ of $P'$ (or $P''$), and another largest triangle $T_m^2$ rooted on the median vertex $m$ of the largest sub-interval of $P'$ (or $P''$) induced by $T_a^2$.

Recursively repeating this, we show that for some subproblem $P^\star$ in step $i$ of the algorithm (note that in step $i$ there may be up to $2^{i}$ separate subproblems) if ${T^i}_a$ and ${T^i}_m$ are interleaving triangles in $P^\star$, we decompose the problem into two smaller subproblems, and if they are not interleaving, we get a single smaller subproblem.
In all cases, the size of each subproblem is between $\frac16$ and $\frac56$ times the size of the previous subproblem, and in the case where we have two subproblems, the sum of their sizes equals the size of the previous subproblem plus $6$.

We will repeat the procedure of constructing dividing triangles in each step on one or two smaller polygons, until our subproblems become triangles themselves; in this case we simply return the area of the triangle.
This procedure is outlined in Algorithm~\ref{alg:dandc}.

\begin{algorithm} [H] 


\caption{Divide-and-Conquer triangle algorithm}
\label {alg:dandc}

     {\bf Procedure} {\sc Largest-Triangle($P$)} \\
     {\bf Input} {$P$: a convex polygon}\\
     {\bf Output} {The largest-area triangle in $P$}\\
      \If{$|P| = 3$}
      {
        \Return $P$\;
      }

    \Else
    {  
         $a$ = an arbitrary root on $P$  \\
         $T_a$ = largest-area triangle rooted at $a$ \\
          $m$ = median point on the largest interval on $P$ between two vertices of $T_a$ \\
         $T_m$ = largest-area triangle rooted at $m$ \\
         $P'$,$P''$=sub-polygons constructed by interleaving intervals using $T_a$ and $T_m$ \\
     \If{ $T_a$ and  $T_m$ are interleaving}
     {
         \Return max (\textsc {Largest-Triangle($P'$)}, \textsc {Largest-Triangle($P''$)}) \\
        } 
        \ElseIf{$P'$ can include the largest-area triangle}
       {

	\Return  \textsc {Largest-Triangle($P'$)} \\
       }
      \Else 
       {
	
	\Return  \textsc {Largest-Triangle($P''$)} \\
       }
    
}


\end{algorithm}

\subsection {Time complexity}\label{sec:time}

We start analyzing the time complexity of the algorithm with the following lemma.

\begin{lemma} \label {lem:56}
  Let $P$ be a convex polygon with $n$ vertices.
  The (one or two) subproblems induced by $P$ have size at most $\frac56 (n + 6)$.
\end{lemma}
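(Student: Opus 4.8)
The plan is to track the six arcs into which the boundary of $P$ is cut by the three vertices of $T_a$ and the three vertices of $T_m$, express each subproblem's size as the sum of three arc sizes plus shared endpoints, and then reduce the claim to a short arithmetic inequality that is driven by two structural facts: $m$ is the \emph{median} of $I_1$, and $I_1$ is the \emph{largest} of the three intervals of $T_a$. First I would fix notation: write $T_a=\{a,a',a''\}$ and $T_m=\{m,m',m''\}$, let $I_1,I_2,I_3$ be the three intervals of $T_a$ with $I_1$ the largest (so $m$ lies at its median), and let $x=|I_1|\ge y=|I_2|,\,z=|I_3|$ denote their interior sizes, so that $x+y+z=n-3$. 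Labelling the six arcs in cyclic order $A_1,\dots,A_6$ with interior-vertex counts $n_1,\dots,n_6$, I have $\sum_i n_i = n-6$, and each subproblem is formed from three of these arcs together with their $6$ endpoints, so its size is three of the $n_i$ plus $6$; summing two complementary subproblems recovers the quantity $n+6$ quoted in the text.

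The key step is to identify exactly which arcs form each subproblem, using Lemma~\ref{lem:interl} twice: $\Lambda$ interleaves with the rooted triangle $T_a$ and with the rooted triangle $T_m$. In the interleaving configuration the cyclic order of the dividing vertices is $a,m,a',m',a'',m''$, and a short system-of-distinct-representatives argument (each $T_a$-interval must host one vertex of $\Lambda$, and each $T_m$-arc must host one vertex of $\Lambda$) shows that the only two compatible triples of arcs are the alternating ones, giving $P'=A_1\cup A_3\cup A_5$ and $P''=A_2\cup A_4\cup A_6$. In the non-interleaving configuration the same argument collapses to a single admissible triple. The crucial consequence, valid in every configuration, is that each subproblem inherits \emph{at most one} of the two halves into which the median $m$ splits $I_1$; hence its contribution coming from $I_1$ is at most $\lceil (x-1)/2\rceil\le x/2$.

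With the arcs identified, the bound becomes arithmetic. Each subproblem size is at most (one half of $I_1$) $+$ (part of $I_2$) $+$ (part of $I_3$) $+\,6$, and since the parts of $I_2$ and $I_3$ are bounded by their full interiors, this is at most $\frac{x}{2}+y+z+4$. Substituting $n+6=x+y+z+9$, the target inequality $\frac{x}{2}+y+z+4\le \frac56\,(x+y+z+9)$ reduces to $y+z\le 2x+21$, which holds with room to spare because $I_1$ is the largest interval, so $y\le x$ and $z\le x$ force $y+z\le 2x$. The same estimate applies verbatim to $P''$ in the interleaving case and to the unique subproblem in the non-interleaving case, so all subproblems satisfy the claimed bound.

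The main obstacle is the configuration and case analysis underlying the second paragraph, not the computation: I must verify that double interleaving genuinely pins $\Lambda$ to the alternating arcs, carefully treat the non-interleaving configuration and its symmetric variants (both of $m',m''$ falling in the same $T_a$-interval), and check degeneracies such as empty arcs or $m',m''$ coinciding with vertices of $T_a$. In all of these the reduction hinges on exactly the same two facts—balance of $I_1$ under the median cut and maximality of $I_1$—so once the arc membership is settled, the generous additive slack of $+6$ absorbs every rounding and boundary term and the $\frac56(n+6)$ bound follows.
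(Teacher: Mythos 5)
Your proposal is correct and takes essentially the same approach as the paper's proof: the six-arc decomposition, the use of double interleaving to force each subproblem to contain at most one of the two halves into which the median $m$ splits the largest $T_a$-interval, and the median/largest-interval arithmetic. Your closing inequality $y+z\le 2x$ is exactly the paper's observation written in complementary form—the paper bounds each $m$-incident half from below by $\frac16(n-6)$ vertices and notes each subproblem excludes one such half, which yields the same $\frac56(n+6)$ bound.
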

 
\begin{proof}
 The dividing triangles $T_a$ and $T_m$ decompose the boundary of $P$ into six intervals. Let $I_1$ and $I_2$ be the two intervals incident to $m$.
 Only one of $I_1$ or $I_2$ can  include a vertex of the largest-area triangle, otherwise the largest-area triangle would no longer interleave  both $T_a$ and $T_m$.
 Consider the $n-6$ (between $n-6$ and $n-4$) vertices not part of $T_a$ or $T_m$.
 Because of the choice of $m$, both $I_1$ and $I_2$ contain at least a factor $\frac16$ of these vertices, so $\frac16(n-6)$ each (see Figure~\ref{fig:dandcb}).
 Since one subproblem does not contain $I_1$, and the other subproblem does not contain $I_2$, each subproblem has size at most $n - \frac16 (n - 6) = \frac56 (n + 6)$.
\end{proof}

Note that, if $P$ splits into two subproblems $P'$ and $P''$, then $|P'| + |P''| \le n+6$.

So the recursive equation of the divide-and-conquer algorithm is

$$T(n)=\max \{T(\alpha(n+6))+T((1-\alpha)(n+6))+O(n+6), T(\alpha(n+6)+O(n+6)\}$$
where $\frac16 \le \alpha \le \frac56$, by Lemma~\ref {lem:56}.

By using $T(3)=1$ and considering the maximum part of the above equation, and using the method of Akra and Bazzi~\cite{akra}, the  recursion can be written as 
$$T(m)= T(\alpha m)+T((1-\alpha)m)+m$$
 and as $m$ is in $O(n)$, $T(n)$ is bounded by  $O(n \log n)$.




\subsection{Correctness}

For the correctness proof of Algorithm~\ref{alg:dandc} we will show that  in each step  of the recursion we always transfer all three vertices of the largest-area triangle $\Lambda$ to the same subproblem (or to both subproblems, if and only if $\Lambda = T_a$ or $\Lambda = T_m$). 
\begin{lemma}
In each step of Algorithm~\ref{alg:dandc}, there is always at least one subproblem containing all three vertices of the largest-area triangle $\Lambda$.
\end{lemma}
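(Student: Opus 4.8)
The plan is to convert the two interleaving conditions guaranteed by Lemma~\ref{lem:interl} into a tiny combinatorial constraint system on the arcs of $P$, and then to show that every feasible placement of the three vertices of $\Lambda$ coincides with the arc set of exactly one of the subproblems $P'$ or $P''$ built by the algorithm. First I would record the consequence of Lemma~\ref{lem:interl}: since $T_a$ and $T_m$ are largest-area rooted triangles and $\Lambda$ is globally largest, $\Lambda$ interleaves both $T_a$ and $T_m$. Because all three objects are triangles, interleaving forces each of the three open arcs cut out by $T_a$ to contain exactly one vertex of $\Lambda$, and likewise each of the three open arcs cut out by $T_m$ to contain exactly one vertex of $\Lambda$. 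The degenerate coincidences, where a vertex of $\Lambda$ equals a vertex of $T_a$ or $T_m$, I would isolate and treat with half-open arcs.

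Next I would introduce the (at most) six fine arcs $I_1,\dots,I_6$ into which the six vertices of $T_a\cup T_m$ subdivide the boundary, and attach an indicator $x_k\in\{0,1\}$ recording whether $I_k$ holds a vertex of $\Lambda$. Each $T_a$-arc is a union of consecutive $I_k$, and ``exactly one $\Lambda$-vertex per $T_a$-arc'' becomes a single equation $\sum x_k=1$ over that group; the three $T_m$-arcs contribute three further such equations. In the interleaving configuration of Figure~\ref{fig:dandca}, where the six boundary vertices strictly alternate, the six equations read $x_1+x_2=x_2+x_3=\cdots=x_6+x_1=1$, whose only solutions in $\{0,1\}$ are $\{I_1,I_3,I_5\}$ and $\{I_2,I_4,I_6\}$. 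These are exactly the two interleaving families of arcs that the algorithm glues into $P'$ and $P''$, so all three vertices of $\Lambda$ lie in the arcs of a single subproblem, which therefore contains $\Lambda$.

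For the non-interleaving configuration of Figure~\ref{fig:dandcb} I would run the identical bookkeeping. Now the six vertices do not alternate, so one $T_a$-arc holds two vertices of $T_m$ while another holds none, and the resulting equation system pins down a \emph{unique} feasible assignment; a representative computation (two of the equations collapse to single-variable equations, which then propagate) confirms this. That lone solution is precisely the single subproblem the algorithm retains, so again one subproblem captures all of $\Lambda$. To finish I would dispose of the degenerate case: if a vertex of $\Lambda$ falls on a shared endpoint of two fine arcs it lies in both subproblems, and a short argument recovers the surrounding claim that this occurs for all three vertices simultaneously if and only if $\Lambda=T_a$ or $\Lambda=T_m$.

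The step I expect to be the main obstacle is not the linear algebra but verifying that ``directly connecting'' the chosen arcs produces a genuine convex polygon in which $\Lambda$ persists as an inscribed triangle of the same area, so that the recursion is well founded. This follows because the vertex set of each constructed subproblem is a subset of the vertices of $P$ that contains the relevant vertices of $\Lambda$, whence convexity and area are preserved automatically; but I would state this explicitly, since it is what licenses applying Lemma~\ref{lem:interl} afresh inside the recursion.
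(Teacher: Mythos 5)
Your proposal is correct and follows essentially the same route as the paper: both invoke Lemma~\ref{lem:interl} to conclude that $\Lambda$ interleaves the two dividing triangles $T_a$ and $T_m$, and then deduce that the three vertices of $\Lambda$ must fall into a single compatible triple of the six intervals, i.e., into one subproblem (a vertex coinciding with a vertex of $T_a$ or $T_m$ lies in both subproblems). Your indicator-variable constraint system is simply a more explicit rendering of the step the paper asserts in one sentence (that interleaving both dividing triangles forces the remaining vertices into the same subproblem), so the content matches, only worked out in greater detail.
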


\begin{proof}
Let $P$ be a convex polygon, and consider the dividing triangles $T_a$ and $T_m$.
If $\Lambda = T_a$ or $\Lambda = T_m$, clearly both subproblems of $P$ contain all three vertices of $\Lambda$.

Otherwise, consider a subproblem $P'$ and suppose it does not contain all three vertices $p$, $q$, and $z$ of $\Lambda$; say (w.l.o.g.) $P'$ contains $p$ but not $z$.
We know $\Lambda = \triangle pqz$ must interleave both ${T_a}$ and ${T_m}$. But then $z$ must be in $P'$. 
This is only possible if $z$ is a vertex of $T_a$ or $T_m$. 
\end{proof}

\begin{theorem}
Algorithm~\ref{alg:dandc}  finds the largest-area triangle in $O(n \log n)$ time.
\end{theorem}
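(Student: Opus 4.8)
The plan is to combine the two results already assembled in this section—correctness and time complexity—into a single statement, so the proof is mostly a matter of citing the preceding lemmas and the recurrence analysis in the right order. First I would establish correctness by induction on $|P|$. The base case $|P|=3$ is immediate, since Algorithm~\ref{alg:dandc} returns the unique triangle. For the inductive step, I would invoke the lemma just proved: in each step there is always at least one subproblem containing all three vertices of the largest-area triangle $\Lambda$. Hence $\Lambda$ survives as a candidate in at least one recursive call, and by the inductive hypothesis that call returns the largest-area triangle of its subproblem, which has area at least that of $\Lambda$; since $\Lambda$ is globally optimal, equality holds and the algorithm reports $\Lambda$'s area.

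The one subtlety in the correctness argument is the non-interleaving case, where the algorithm discards one subproblem entirely (the \textbf{ElseIf}/\textbf{Else} branches). I would need to argue that the branch test ``$P'$ can include the largest-area triangle'' correctly identifies the surviving subproblem, i.e.\ that when $T_a$ and $T_m$ do not interleave, Lemma~\ref{lem:interl} forces all three vertices of $\Lambda$ into exactly one of the two candidate interval-triples, and that triple is the one the algorithm keeps. This is really a restatement of the preceding lemma specialized to the single-subproblem configuration (Figure~\ref{fig:dandcb}), so I would lean on it directly rather than reprove it.

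For the running time, I would first justify that each non-recursive portion of a call costs $O(n)$: computing $T_a$ and $T_m$ each takes linear time by Lemma~\ref{lart}, and locating the median vertex $m$ and assembling the sub-polygons $P'$ (and $P''$) by connecting the compatible intervals is also linear. I would then cite Lemma~\ref{lem:56}, which bounds each subproblem by $\tfrac56(n+6)$, together with the observation $|P'|+|P''|\le n+6$ in the two-subproblem case. Feeding these into the recurrence stated in Section~\ref{sec:time} and applying the Akra–Bazzi method~\cite{akra} yields $T(n)=O(n\log n)$.

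The main obstacle I anticipate is the time bound rather than correctness: the additive $+6$ in the subproblem sizes means the recursion is not a clean $T(\alpha m)+T((1-\alpha)m)+m$, and one must confirm that this constant overhead does not accumulate across the $O(\log n)$ levels of recursion to spoil the bound. The substitution $m\in O(n)$ used in Section~\ref{sec:time} sweeps this under the rug; to be rigorous I would verify that the depth of the recursion is $O(\log n)$ (guaranteed by the $\tfrac56$ shrinkage factor) and that the total added constant is therefore only $O(\log n)$ extra vertices, which is absorbed into the $O(n\log n)$ total. With that check in place, combining correctness with the recurrence gives the theorem.
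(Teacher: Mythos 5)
Your proposal matches the paper's (implicit) proof: the paper states this theorem with no separate argument, relying on exactly the ingredients you assemble --- the preceding lemma that at least one subproblem retains all three vertices of $\Lambda$, Lemma~\ref{lart} for linear-time rooted triangles, Lemma~\ref{lem:56} for the $\frac{5}{6}(n+6)$ size bound, and the Akra--Bazzi recurrence of Section~\ref{sec:time}. Your explicit induction, and your flagging of the non-interleaving branch test and of the additive $+6$ in the recurrence, are in fact slightly more careful than the paper, which glosses over both points.
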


\section{Largest-area quadrangle}
Let $P$ be a convex polygon with $n$ vertices.
In the reminder, we denote by $\Lambda_{P,k}$  the largest (by area) $P$-aligned polygon with $k$ vertices. Also we denote $Q_{p,k}$ for $P$-aligned polygons with $k$ vertices. A polygon $Q$ is $k$-stable where it has  $k$ stable vertices.

 Note that all the vertices of $\Lambda_{P,k}$ are stable,  but a $k$-stable $Q_{P,k}$  does not necessarily coincides with the  $\Lambda_{P,k}$, as illustrated in Figure~\ref{fig:coin}.

Indeed the  idea of the presented method~\cite{45} was based on starting with a rooted $Q_{P,k}$ and moving the vertices of  $Q_{P,k}$ around the given polygon $P$ where keeping the cyclic ordering of $Q_{P,k}$ and increasing the area, and updating the area while finding a larger $k-1$-stable rooted polygon. 

This procedure will result in  keeping the sequence of the area of the potential solution only increasing. The authors~\cite{45} named this attribute as the \textit{unimodality of the area}, but we illustrated in Figure~\ref{fig:mmud} that keeping the unimodality will not result in finding the optimal solution necessarily.

\subsection {Dobkin and Snyder's  algorithm for $k=4$}
 We will now recall the \textit{quadrilateral  algorithm} \cite{45}, that is outlined in Algorithm~\ref {alg:quadrangle} and illustrated in Figure~\ref {fig:trialg}.

\begin {figure}
  \includegraphics{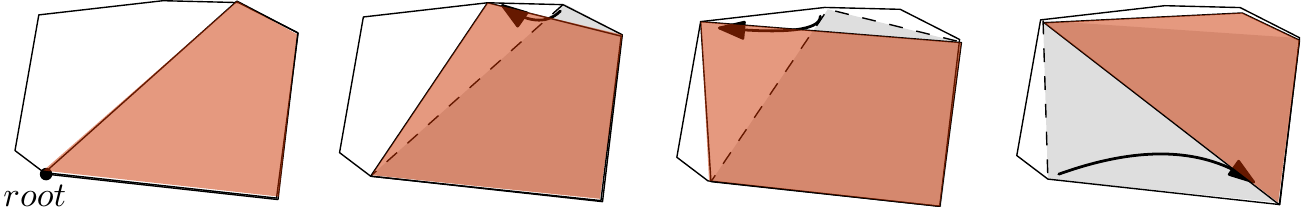}
   \centering
  \caption {The first three steps of Algorithm~\ref{alg:quadrangle}.}
  \label {fig:trialg}
\end {figure}

  Let $P=\{p_0,p_1,\ldots,p_{n-1}\}$. We assume that $P$ is given in a counter  clockwise orientation. Assume an arbitrary vertex of $P$ is is the root of the algorithm, assign this vertex and its three subsequent vertices  in the counter clockwise order on the boundary of $P$ to variables $a, b$, $c$ and $d$. 
  We then ``move $d$ forward'' along the  boundary of $P$ as long as this increases the area of $ abcd$.
  
If we can no longer advance $d$, we advance $c$ if this increases the area of $ abcd$, then try again to advance $d$.
If we can no longer advance $d$ and $c$, we advance $b$ if this increases the area of $ abcd$, then try again to advance $d$ and $c$.

 If we cannot advance either $d$, $c$ or $b$ any further, we advance $a$.
  We keep track of the largest-area quadrilateral found, and stop when $a$ returns to the starting position.
Since $a$ visits $n$ vertices and $d$, $c$ and $b$ each visit fewer than $2n$ vertices, the algorithm runs in $O(n)$ time (assuming we are given the cyclic ordering of the points on $P$).

Indeed, Algorithm~\ref{alg:quadrangle} is based on an observation that the largest inscribed quadrilateral treats as a unimodal function, which is not correct. 
 But, of course there is an  quadrilateral with some  vertices to be stable, that  Algorithm~\ref{alg:quadrangle} will find it in linear-time, but our counter example shows that the reported  quadrilateral does not necessarily equal to $\Lambda_{4,P} $. Furthermore, the reported quadrilateral is not even  4-stable.

\begin {figure}
  \includegraphics{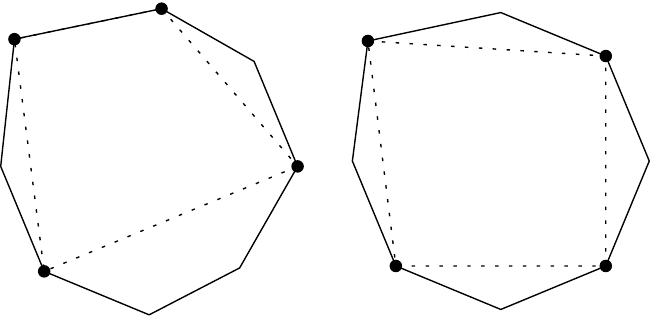}
   \centering
  \caption {Two 4-stable 4-gons inscribed in a convex polygon.}
  \label {fig:coin}
\end {figure}

\subsection{Counter-example to Algorithm~\ref {alg:quadrangle}}\label {sec:coex}

 In  Figure~\ref{fig:counter} we provide a polygon $P$ on $16$ vertices such that $\Lambda_{4,P}$  and  the largest-area inscribed quadrilateral computed by Algorithm~\ref {alg:quadrangle} are not the same. 
\begin {figure}
\includegraphics[scale=0.85]{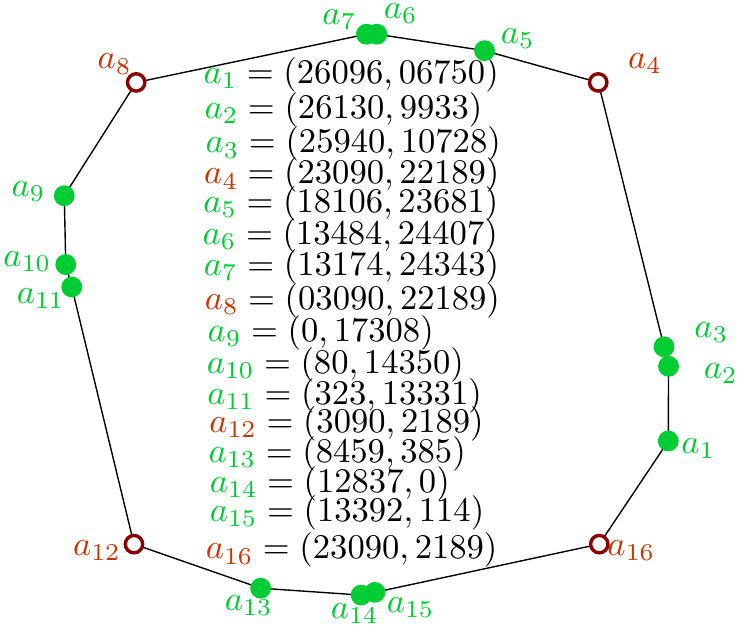}
\centering
\includegraphics[scale=0.85]{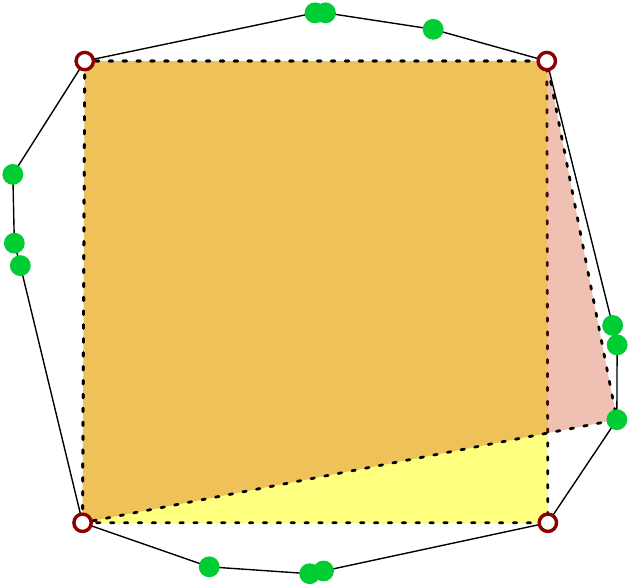}
\caption {(left) A polygon on $16$ vertices. (right)  The largest-area quadrilateral $a_4 a_8 a_{12} a_{16}$ (yellow), and the quadrilateral reported by Algorithm~\ref{alg:quadrangle}; $a_1 a_4 a_8 a_{12} $ (red).}
\label {fig:counter}
\end {figure}

\begin {figure}
\includegraphics{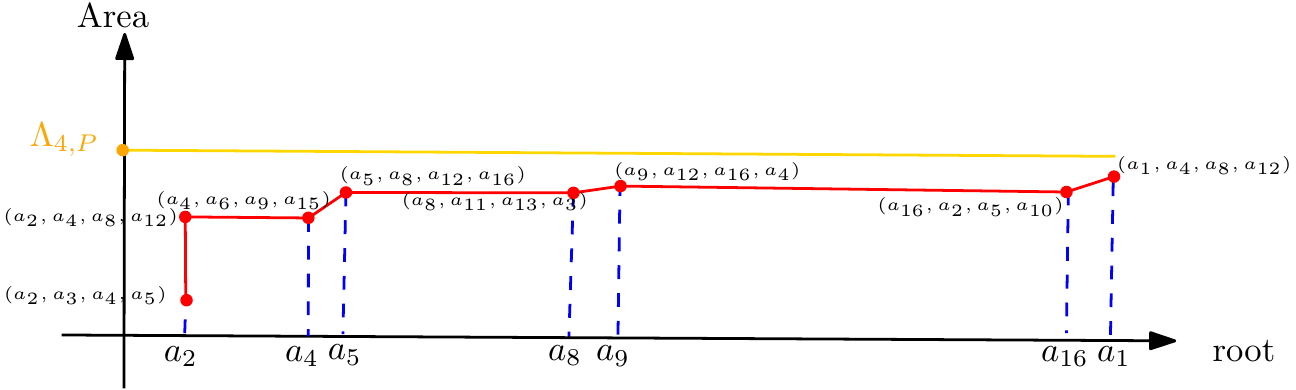}

\caption {Keeping the unimodality of the area of the potential solution during the algorithm will not result in the optimal solution necessarily.}
\label {fig:mmud}
\end {figure}
 We use the following points:
  $a_1=(26096,06750), a_2=(26130,9933), a_3=(25940,10728), a_4=(23090,22189), a_5=(18106,23681), a_6=(13484,24407),$ $ a_7=(13174,24343),  a_8=(3090,22189),  a_9=(0,17308),  a_{10}=(80,14350),  a_{11}=(323,13331),  a_{12}=(3090,2189),  a_{13}=(8459,385),  a_{14}=(12837 ,0),$ $ a_{15}=(13392,114) , a_{16}=(23090,2189)$. 
  The largest-area quadrilateral is $ a_4 a_8 a_{12} a_{16}$; however, Algorithm~\ref {alg:quadrangle} reports  $a_1  a_4 a_8 a_{12}$ as the largest-area quadrilateral, while  starting the algorithm from an arbitrary root. 
The results of running Algorithm~\ref{alg:quadrangle} while starting   on root $a_1$ are demonstrated on Figure~\ref{fig:mmud}.  
Thus,  the algorithm fails to find $\Lambda_{P,4}$ on any possible root.



\begin{algorithm} [H]
\SetAlgoLined

\caption{quadrilateral algorithm}


\label {alg:quadrangle}

     {\bf Input} {$P$: a convex polygon, $r$: a vertex of $P$}\\
     {\bf Output} { $m$: an quadrilateral}\\
     {\bf Legend} Operation {\texttt{\textit{next}} means the next vertex in counter-clockwise order of $P$}\\
     a = r\\
     b = \texttt{\textit{next}}(a)\\
     c = \texttt{\textit{next}}(b)\\
    d = \texttt{\textit{next}}(c)\\
     m = \texttt{\textit{area}}($abcd$) \\
     
     \While{True}
{
       \While{\texttt{\textit{area}}($abcd$) $\leq$ \texttt{\textit{area}}(abc \texttt{\textit{next}}(d))}
       {
		d = \texttt{\textit{next}}(d)\;

 \While{\texttt{\textit{area}}($abcd$) $\leq$ \texttt{\textit{area}}(ab \texttt{\textit{next}}(c)d)}
       {
		 c = \texttt{\textit{next}}(c)\;

  }   
      \While{\texttt{\textit{area}}($abcd$) $\leq$ \texttt{\textit{area}}(a \texttt{\textit{next}}(b)cd)}
       {
		b = \texttt{\textit{next}}(b)\;
      }

}
      
         m = max(\texttt{\textit{area}}($abcd$),m)

        a = \texttt{\textit{next}}(a)\;

       \If{a=r}
    {
       \Return m\;
     }

 \If{b=a}
    {
       b = \texttt{\textit{next}}(b)\;
	 \If{c=b}
		 {c = \texttt{\textit{next}}(c)\;
		 \If{d=c}
 			{d = \texttt{\textit{next}}(d)\;}}
     }

 }   
\end{algorithm}

\section {Implications} \label {sec:applications}

Our discovery directly or indirectly affects the results of the following studies.

\subsection {Largest-area $k$-gon inscribed in a convex polygon}
  As mentioned, the Dynamic Programming method presented by Boyce \etal~\cite{45} for finding the largest-area $k$-gon starts looking for the optimal answer  from the largest-area rooted triangle. Aggarwal \etal~\cite{msearch} improve their result to $O(kn+ n \log n)$ time by using a matrix search method. The method that  Boyce \etal~\cite{45} use to find the largest-area rooted triangle is again based on the assumption that there is only one stable triangle on each vertex of the polygon. So, they start  their algorithm from a non-optimal  answer. Also in their algorithm, they compute  $k$ intervals and they look for one point per interval, and these intervals are computed according to the starting situation. So they fail to find the largest area $k$-gon that is inscribed in a convex polygon.\par
In Lemma~\ref{lart} we proved that the globally largest-area rooted triangle can be found  in linear time. As such, it is relatively straightforward to correct their algorithm by changing the first step of their algorithm.

\subsection {Largest-Area Triangle inscribed in  a set of  imprecise points}
  Keikha \etal~\cite{euro17} consider the question of finding bounds on the area of the largest-area triangle in a set of {\em imprecise} points: points that are known to be in a given region in space.
  Their algorithm for computing the largest-area triangle on a given set of imprecise points modeled by parallel line segments was based on the Dobkin and Snyder algorithm~\cite{45}, and thus also fails to find the optimal answer. Two cases in particular are impacted.
  When the imprecise points are modeled as unit-length segments, their algorithm directly applies the largest-area triangle algorithm. This algorithm can easily be fixed by using our new divide-and-conquer algorithm instead, but this will result in a running time of $O(n \log n)$, while a running time of $O(n)$ was reported in~\cite {euro17}.
  In contrast, when the imprecise points are modeled as segments of arbitrary length, their solution does not directly apply the largest-triangle algorithm but is rather based on the stability of certain triangles. With a slightly extended analysis, the reported running time of $O(n^2)$ can still be achieved.

\subsection {Convex hull of a simple polygon}
 Bhattacharia \etal~\cite{bhat} applied the idea of unimodality of the vertical distance of a moving vertex on the boundary of a  convex polygon from one of its edges to compute the  convex hull of a simple polygon  $P$ in linear time. Specifically, they use the fact that there is one vertex $p_j$  on a convex polygon with maximum distance from an edge $p_1p_k$ of $P$. Using this fact, they decompose their problem into two subproblems by computing two half convex chains, one chain starting from $p_1$ and ending at $p_j$, and the other starting from $p_k$ and ending at $p_j$. On each subproblem, they use a stack $S$ which  stores some   vertices, such that the  triangles consisting of one vertex of $S$ and a fixed edge of $P$ are only increasing in the order in which they are stored in the stack. But as there is only one non-fixed vertex (for considering the area of a triangle with a fixed base) in each step of the algorithm~\cite{bhat}, the application of the idea of changing the area of an inscribed triangle by moving only one of its vertices~\cite{45} is still correct. Therefore, the correctness of the results of~\cite{bhat} is not impacted by our discovery.

\subsection {Critical Triangle}
Kallus~\cite{kal1} applied the Pentagon Lemma~\cite{45}  to compute the critical triangle $T$  in a convex compact subset of $\mathcal{R}^2$; $K$, where the critical triangle is an inscribed  triangle with maximum area  that the critical determinant of $K$  is equal to twice the area of $T$. 
 As the Pentagon Lemma~\cite{45} is still correct, the correctness of the results of~\cite{kal1} is not impacted by our discovery. They called $K$ is extensible if there is a domain $K'$ containing $K$ but different from it that has the same critical determinant as
$K$. Otherwise,  $K$ is inextensible.

\section {Discussion}

To summarize, we disproved the linear-time algorithm presented  by Dobkin and Snyder~\cite {45} for computing the largest-area inscribed triangle by presenting  a $9$-vertex polygon, on which the algorithm fails to output the optimal solution.
Dobkin and Schnyder's algorithm also fails to find the largest-area 4-gon in a convex polygon. For $k = 2$ and for $k \ge 5$, it was already known that their algorithm fails, so this now conclusively shows that the algorithm is wrong for all possible valies of k.

 We also  presented a divide-and-conquer algorithm for computing the largest inscribed $k$-gon which runs in $O(n \log n) $ time.  
Our findings reopen the question of whether the largest-area quadrangle in a convex polygon can be found in linear time. 

There remains a significant gap between the best know algorithm by Boyce \etal~\cite{48}, which runs in $O(kn + n \log n)$ time, and the lower bound of $\Omega(n \log n)$~\cite{msearch}. 

\section*{Acknowledgments}
The authors would like to thank everybody who has discussed this discovery and its implications with them over the past months, in particular Bahareh Banyassady, Ahmad Biniaz, Prosenjit Bose, Yoav Kallus, Kshitiz Kumar, Manish Kumar Bera,  Benjamin Raichel, Lena Schlipf, and the participants of Dagstughl Seminar 17171 on Computional Geometry, for their useful comments and insights.
Maarten L\"offler was partially supported by the Netherlands Organization for Scientific Research (NWO) under project no. 614.001.504.




\bibliographystyle{elsarticle-num}


\bibliography{sample}

\begin{thebibliography}{10}
\expandafter\ifx\csname url\endcsname\relax
  \def\url#1{\texttt{#1}}\fi
\expandafter\ifx\csname urlprefix\endcsname\relax\def\urlprefix{URL }\fi
\expandafter\ifx\csname href\endcsname\relax
  \def\href#1#2{#2} \def\path#1{#1}\fi

\bibitem{45}
D.~P. Dobkin, L.~Snyder, On a general method for maximizing and minimizing
  among certain geometric problems, in: 20th Annual Symposium on Foundations of
  Computer Science (sfcs 1979), 1979, pp. 9--17.
\newblock \href {http://dx.doi.org/10.1109/SFCS.1979.28}
  {\path{doi:10.1109/SFCS.1979.28}}.

\bibitem{chandran}
S.~Chandran, D.~M. Mount,
  \href{http://www.worldscientific.com/doi/abs/10.1142/S0218195992000123}{A
  parallel algorithm for enclosed and enclosing triangles}, International
  Journal of Computational Geometry \& Applications 02~(02) (1992) 191--214.
\newblock \href
  {http://arxiv.org/abs/http://www.worldscientific.com/doi/pdf/10.1142/S0218195992000123}
  {\path{arXiv:http://www.worldscientific.com/doi/pdf/10.1142/S0218195992000123}},
  \href {http://dx.doi.org/10.1142/S0218195992000123}
  {\path{doi:10.1142/S0218195992000123}}.
\newline\urlprefix\url{http://www.worldscientific.com/doi/abs/10.1142/S0218195992000123}

\bibitem{kluv}
V.~Keikha, M.~L{\"{o}}ffler, J.~Urhausen, I.~van~der Hoog,
  \href{http://arxiv.org/abs/1705.11035}{Maximum-area triangle in a convex
  polygon, revisited}, CoRR abs/1705.11035.
\newblock \href {http://arxiv.org/abs/1705.11035} {\path{arXiv:1705.11035}}.
\newline\urlprefix\url{http://arxiv.org/abs/1705.11035}

\bibitem{kallus}
Y.~Kallus, \href{http://arxiv.org/abs/1706.03049}{A linear-time algorithm for
  the maximum-area inscribed triangle in a convex polygon}, CoRR
  abs/1706.03049.
\newblock \href {http://arxiv.org/abs/1706.03049} {\path{arXiv:1706.03049}}.
\newline\urlprefix\url{http://arxiv.org/abs/1706.03049}

\bibitem{jin}
K.~{Jin}, {Maximal Area Triangles in a Convex Polygon}, ArXiv e-prints\href
  {http://arxiv.org/abs/1707.04071} {\path{arXiv:1707.04071}}.

\bibitem{parvu}
O.~P{\^a}rvu, D.~Gilbert, Implementation of linear minimum area enclosing
  triangle algorithm, Computational and Applied Mathematics 35~(2) (2016)
  423--438.

\bibitem{shb}
M.~I. Shamos, Problems in computational geometry, unpublished manuscript, 1975.

\bibitem{sham}
M.~I. Shamos, Computational geometry. ph.d. thesis. (May 1978) 1978.

\bibitem{rotcal}
G.~T. Toussaint, Solving geometric problems with the rotating calipers, in:
  Proc. IEEE Melecon, Vol.~83, 1983, p. A10.

\bibitem{mmud}
D.~Avis, G.~T. Toussaint, B.~K. Bhattacharya,
  \href{http://www.sciencedirect.com/science/article/pii/0898122182900542}{On
  the multimodality of distances in convex polygons}, Computers \& Mathematics
  with Applications 8~(2) (1982) 153 -- 156.
\newblock \href
  {http://dx.doi.org/http://dx.doi.org/10.1016/0898-1221(82)90054-2}
  {\path{doi:http://dx.doi.org/10.1016/0898-1221(82)90054-2}}.
\newline\urlprefix\url{http://www.sciencedirect.com/science/article/pii/0898122182900542}

\bibitem{48}
J.~E. Boyce, D.~P. Dobkin, R.~L.~S. Drysdale, L.~J. Guibas,
  \href{http://dx.doi.org/10.1137/0214011}{Finding extremal polygons}, SIAM
  Journal on Computing 14~(1) (1985) 134--147.
\newblock \href {http://arxiv.org/abs/http://dx.doi.org/10.1137/0214011}
  {\path{arXiv:http://dx.doi.org/10.1137/0214011}}, \href
  {http://dx.doi.org/10.1137/0214011} {\path{doi:10.1137/0214011}}.
\newline\urlprefix\url{http://dx.doi.org/10.1137/0214011}

\bibitem{msearch}
A.~Aggarwal, M.~Klawe, S.~Moran, P.~Shor, R.~Wilber, Geometric applications of
  a matrix searching algorithm, in: Proceedings of the second annual symposium
  on Computational geometry, ACM, 1986, pp. 285--292.

\bibitem{ch2}
J.~S. Chang, C.~K. Yap, \href{http://dx.doi.org/10.1007/BF02187692}{A
  polynomial solution for the potato-peeling problem}, Discrete {\&}
  Computational Geometry 1~(2) (1986) 155--182.
\newblock \href {http://dx.doi.org/10.1007/BF02187692}
  {\path{doi:10.1007/BF02187692}}.
\newline\urlprefix\url{http://dx.doi.org/10.1007/BF02187692}

\bibitem{cabello2014peeling}
S.~Cabello, J.~Cibulka, J.~Kyn{\v{c}}l, M.~Saumell, P.~Valtr, Peeling potatoes
  near-optimally in near-linear time, in: Proceedings of the thirtieth annual
  symposium on Computational geometry, ACM, 2014, p. 224.

\bibitem{da3}
K.~Daniels, V.~Milenkovic, D.~Roth,
  \href{http://www.sciencedirect.com/science/article/pii/0925772195000410}{Finding
  the largest area axis-parallel rectangle in a polygon}, Computational
  Geometry 7~(1) (1997) 125 -- 148.
\newblock \href
  {http://dx.doi.org/http://dx.doi.org/10.1016/0925-7721(95)00041-0}
  {\path{doi:http://dx.doi.org/10.1016/0925-7721(95)00041-0}}.
\newline\urlprefix\url{http://www.sciencedirect.com/science/article/pii/0925772195000410}

\bibitem{linc2}
S.~Cabello, O.~Cheong, C.~Knauer, L.~Schlipf, Finding largest rectangles in
  convex polygons, Computational Geometry 51 (2016) 67--74.

\bibitem{alt1995}
H.~Alt, D.~Hsu, J.~Snoeyink, Computing the largest inscribed isothetic
  rectangle, Tech. rep., Vancouver, BC, Canada, Canada (1994).

\bibitem{daniels1997}
K.~Daniels, V.~Milenkovic, D.~Roth, Finding the largest area axis-parallel
  rectangle in a polygon, Computational Geometry 7~(1-2) (1997) 125--148.

\bibitem{hall2006}
O.~Hall-Holt, M.~J. Katz, P.~Kumar, J.~S. Mitchell, A.~Sityon, Finding large
  sticks and potatoes in polygons, in: Proceedings of the seventeenth annual
  ACM-SIAM symposium on Discrete algorithm, Society for Industrial and Applied
  Mathematics, 2006, pp. 474--483.

\bibitem{knauer2012}
C.~Knauer, L.~Schlipf, J.~M. Schmidt, H.~R. Tiwary, Largest inscribed
  rectangles in convex polygons, Journal of discrete algorithms 13 (2012)
  78--85.

\bibitem{7}
A.~DePano, Y.~Ke, J.~O'Rourke, Finding largest inscribed equilateral triangles
  and squares, in: Proc. 25th Allerton Conf. Commun. Control Comput, 1987, pp.
  869--878.

\bibitem{8}
K.~Jin, K.~Matulef, Finding the maximum area parallelogram in a convex polygon,
  in: 23 rd Canadian Conference on Computational Geometry, 2011.

\bibitem{melissaratos}
E.~A. Melissaratos, D.~L. Souvaine, On solving geometric optimization problems
  using shortest paths, in: Proceedings of the sixth annual symposium on
  Computational geometry, ACM, 1990, pp. 350--359.

\bibitem{52}
R.~L.~S. Drysdale, J.~W. Jaromczyk, A note on lower bounds for the maximum area
  and maximum perimeter k-gon problems, Information processing letters 32~(6)
  (1989) 301--303.

\bibitem{euro17}
V.~Keikha, M.~L{\"o}ffler, A.~Mohades, Largest and smallest area triangles on a
  given set of imprecise points, in: EuroCG, 2017.

\bibitem{bhat}
B.~Bhattacharya, H.~ElGindy, A new linear convex hull algorithm for simple
  polygons, IEEE Transactions on Information Theory 30~(1) (1984) 85--88.

\bibitem{akra}
M.~Akra, L.~Bazzi, On the solution of linear recurrence equations,
  Computational Optimization and Applications 10~(2) (1998) 195--210.

\bibitem{kal1}
Y.~Kallus, Inextensible domains, Geometriae Dedicata 173~(1) (2014) 177--184.

\end{thebibliography}

\end{document}